\documentclass[9pt,shortpaper,twoside,web]{ieeecolor}
\usepackage{generic}

\let\labelindent\relax
\usepackage{enumitem}

\usepackage{amsthm}

\newtheorem{problem}{Problem}
\newtheorem{assumption}{Assumption}
\newtheorem{remark}{Remark}

\newtheorem{lemma}{Lemma}
\newtheorem{theorem}{Theorem}

\usepackage{cite}
\usepackage{amsmath,amssymb,amsfonts}
\usepackage{mathtools}

\usepackage{algorithmic}
\usepackage{algorithm}

\usepackage{graphicx}
\usepackage{textcomp}
\usepackage{xcolor}
\usepackage{comment}

\usepackage{tikz}
\usepackage[american]{circuitikz}
\usetikzlibrary{positioning, arrows.meta, calc}

\def\BibTeX{{\rm B\kern-.05em{\sc i\kern-.025em b}\kern-.08em
    T\kern-.1667em\lower.7ex\hbox{E}\kern-.125emX}}
\begin{document}
\title{A Gray-Box Approach for \\ Decentralized Grid-Equivalent Model Identification}
\author{Sanjay Chandrasekaran$^{1,2}$,  Florian Dörfler$^{2}$ and  Silvia Mastellone$^{1}$
\thanks{This work was supported by the Swiss National Science Foundation under NCCR Automation, grant agreement 51NF40\_180545}
\thanks{$^{1}$Sanjay Chandrasekaran and Silvia Mastellone are with the Institute of Electric Power Systems, University of Applied Sciences Northwest Switzerland, Windisch, Switzerland. \texttt{ sanjay.chandrasekaran@fhnw.ch, silvia.mastellone@fhnw.ch}}
\thanks{$^{2}$Sanjay Chandrasekaran and Florian Dorfler are with the Automatic Control Lab (IfA) at ETH Zurich, Switzerland. 
\texttt{schandraseka@ethz.ch, dorfler@ethz.ch}}}

\maketitle

\begin{abstract}
We propose a decentralized, frequency-domain identification algorithm that estimates the grid-equivalent model using local measurements from the perspective of each converter. Since local electric signals in a multi-converter setup are affected by voltage inputs from the grid, estimating a direct equivalent impedance yields biased and inaccurate results. To overcome this, we design a framework that decouples the effect of the equivalent impedance (passive) from that of the equivalent voltage (active). The parameters and equivalent grid voltages are then estimated using a least-squares algorithm and a Kalman filter, respectively, applied across frequency samples, with additional pre-processing techniques to remove the influence of the grid on the local estimated models. We then demonstrate the accuracy and performance of our algorithm on an interconnected $5-$converter system in grid-forming mode, with minimal voltage excitations and non-nominal operating conditions.
\end{abstract}

\begin{IEEEkeywords}
Grid-forming control, system identification, parameter estimation
\end{IEEEkeywords}

\section{Introduction}
\label{sec:introduction}
Over the past few decades, climate change and sustainability concerns have motivated a remarkable increase in the use of renewable energy sources and associated power electronic technologies within the electric grid \cite{milano}. Some persistent challenges with the integration of power electronics into the grid have been data-driven modeling, control, and stability analysis of the grid \cite{milano,gross,uros}. The estimation and/or knowledge of grid impedance, for example, has aided in ascertaining small-signal stability \cite{sun,impedancebasedstability}, enabling adaptive control to ensure stability \cite{adaptivecontrol,adaptivecontrol1}, and ensuring the safety of distributed generation systems by identifying islanding conditions  \cite{islanding}.

To identify the grid-impedance model, the grid (ignoring individual device dynamics) is represented as a "small-signal model" \cite{uros,decentralizedstability,mimoidentificationverena}, a linearized transfer function model that represents the equivalent impedance relating the terminal voltage as inputs and current injected as outputs at points of common couplings (PCCs). However, identifying the grid impedance model for large-scale power grids in a centralized manner presents several challenges; the primary one being the difference in time-scales between low-inertia renewables and high-inertia synchronous generators, thus making it difficult to synchronize measurement time-stamps \cite{grideqest}, along with scalability issues. Hence, we are interested in estimating the small-signal equivalent impedance/ admittance model from the perspective of local converters at each PCC, along with the influence of the rest of the grid on the local converters, in a decentralized and parallel manner.
\begin{figure}
    \centering
    \scalebox{0.55}{\begin{tikzpicture}[
        >=stealth,
        networkNode/.style={circle, draw=gray, thick, fill=gray!10, inner sep=1.5pt},
        myblue/.style={color=cyan!60!blue}
    ]
    
    \coordinate (posN1)  at (0, 2.5);
    \coordinate (posN2)  at (-1, 0.5);
    \coordinate (posNi)  at (3.5, -1.8);
    \coordinate (posNn)  at (6, 2);
    \coordinate (posNn1) at (6.2, 0);
    
    \draw[myblue, dashed, semithick] plot [smooth cycle, tension=0.8] coordinates {
        (posN1) (posNn) (posNn1) (posNi) (posN2)
    };
    
    \node[myblue, font=\small] at (3, 3) {Interconnected grid};
    
    \draw[gray, semithick] (posN1) -- (posN2);
    \draw[gray, semithick] (posN1) -- (posNi);
    \draw[gray, semithick] (posN1) -- (posNn);
    \draw[gray, semithick] (posN2) -- (posNi);
    \draw[gray, semithick] (posNn) -- (posNi);
    \draw[gray, semithick] (posNn) -- (posNn1);
    \draw[gray, semithick] (posNn1) -- (posNi);
    
    \node[networkNode, label={above:1}] (N1) at (posN1) {};
    \node[networkNode, label={above left:2}] (N2) at (posN2) {};
    \node[networkNode, label={below:$i$}] (Ni) at (posNi) {};
    \node[networkNode, label={above right:$N$}] (Nn) at (posNn) {};
    \node[networkNode, label={below right:$N-1$}] (Nn1) at (posNn1) {};
    
    \node at (2.8, 0.7) {$\cdot$};
    \node at (3.1, 0.4) {$\cdot$};
    \node at (3.4, 0.1) {$\cdot$};
    
    \newcommand{\drawconverter}[3]{
        \node[draw, rounded corners=4pt, minimum width=1.4cm, minimum height=1.1cm, semithick] (#3) at (#1,#2) {};
        \begin{scope}[shift={(#3.center)}, scale=0.7]
            \draw[semithick] (-0.3, 0.3) -- (-0.3, -0.3);
            \draw[semithick] (-0.15, 0.3) -- (-0.15, -0.3);
            \draw[semithick] (-0.15, 0.2) -- (0.1, 0.4) -- (0.1, 0.5);
            \draw[semithick] (-0.15, -0.2) -- (0.1, -0.4) -- (0.1, -0.5);
            \draw[->, >=stealth, semithick] (-0.15, -0.2) -- (-0.025, -0.3);
            \draw[semithick] (0.1, 0.4) -- (0.4, 0.4) -- (0.4, -0.4) -- (0.1, -0.4);
            \draw[semithick] (0.25, 0.2) -- (0.55, 0.2);
            \draw[semithick, fill=white] (0.25, -0.15) -- (0.55, -0.15) -- (0.4, 0.2) -- cycle;
        \end{scope}
    }
    
    \drawconverter{-2.5}{2.5}{C1}
    \drawconverter{-3}{0.5}{C2}
    \drawconverter{-0.5}{-1.8}{Ci}
    \drawconverter{8}{2}{Cn}
    \drawconverter{8.5}{0}{Cn1}
    
    \node[below=2pt of C1] {VSC $1$};
    \node[below=2pt of C2] {VSC $2$};
    \node[below=2pt of Ci] {VSC $i$};
    \node[below=2pt of Cn1] {VSC $N-1$};
    \node[below=2pt of Cn] {VSC $N$};
    
    \draw[semithick] (C1.east) -- (N1.west);
    \draw[semithick] (C2.east) -- (N2.west);
    \draw[semithick] (Ci.east) -- (Ni.west);
    \draw[semithick] (Cn.west) -- (Nn.east);
    \draw[semithick] (Cn1.west) -- (Nn1.east);
    
    \node at (-1.5, -0.4) {$\cdot$};
    \node at (-1.2, -0.7) {$\cdot$};
    \node at (-0.9, -1.0) {$\cdot$};
    
    \draw[-{Latex[length=2mm, width=1.5mm]}, myblue, thick] 
        (-1.5, 2.7) -- (-0.2, 2.7) 
        node[midway, above, text=black, inner sep=2pt] {$i_{dq,1}$};
        
    \draw[-{Latex[length=2mm, width=1.5mm]}, myblue, thick] 
        (-0.2, 2.7) -- (-0.2, 3.5) 
        node[above, text=black, inner sep=2pt] {$v_{dq,1}$};
        
    \end{tikzpicture}}
    
    \caption{Illustration of the interconnection between multiple VSCs \cite{decentralizedstability}.}
    \label{fig:nconvertersystem}
\end{figure}

\subsection*{Related Work}

A comprehensive overview of system identification (SysID) algorithms for determining the equivalent grid impedance transfer function is provided in \cite{mimoidentificationverena}. In the class of non-parametric methods, "frequency sweep" methods are used \cite{sweep1,sweep2}, wherein multiple sinusoidal perturbation signals are introduced, after which, the impedance is reconstructed using the recorded currents and voltages at different frequencies. Unfortunately, this approach would not only disrupt normal  grid operation but also be time-consuming. In \cite{ultrafast}, a fast non-parametric algorithm was developed that identifies the grid impedance in $dq$ frame by converting it from a multi-input multi-output (MIMO) transfer function to two single-input single-output (SISO) transfer functions. In the domain of parametric methods, conventional prediction-error methods like auto-regressive methods with exogenous inputs (ARMAX), and sub-space algorithms that estimate the state-space equivalent, are used \cite{Ljung1998}. In \cite{amin}, a recursive least-squares algorithm in the frequency domain is used to determine the system parameters. 

As far as the multi-agent setup is concerned, \cite{communicationfree} discusses the influence of the grid on local converters, particularly when voltage source converters (VSCs) carry out simultaneous excitation. To counter this, they develop a "communication-free" algorithm by establishing a trigger mechanism embedded into each VSC that enforces excitations only during particular time intervals. In \cite{grideqest}, the equivalent grid impedance and Thevenin voltage are estimated in a decentralized manner. However, the estimation involves collecting data at multiple active and reactive power operating points, thus disrupting grid operation. In \cite{mininvasiveEKF}, an extended Kalman filter was designed to estimate the equivalent grid impedance and real-time Thevenin voltage. However, it makes use of the assumption that the equivalent voltage is an infinite bus. In \cite{PandO}, a technique "\emph{perceive and optimize}" (P\&O) was developed. In this method, the first step is to "\emph{perceive}": estimate local grid dynamics in the form a parameterized MIMO transfer function, which is carried out using the algorithms described in \cite{mimoidentificationverena}. The second step, i.e., "\emph{optimize}", is to define an optimal control problem, optimizing over a set of parameters describing the ideal closed-loop MIMO transfer function adhering to industry-standard grid codes. To account for the variability of grid operating conditions, multiple cycles of P\&O are carried out in equal intervals. Although the P\&O algorithm works efficiently in the single-converter case, it requires coordination between each converter in the multi-agent setup.

\subsection*{Contributions}
While existing decentralized estimation algorithms can theoretically operate in parallel at each point of common coupling (PCC), they rely on restrictive operational assumptions. Specifically, they require coordinated system identification (SysID) \cite{communicationfree,PandO,onlinegridimpedanceestimation}, multiple operating points \cite{grideqest,sweep1}, not considering closed-loop control interactions \cite{amin}, or the unrealistic assumption of a perfectly stiff equivalent voltage (i.e., an infinite bus) \cite{mininvasiveEKF,mimoidentificationverena,ultrafast,amin}. These assumptions frequently break down in dynamic, multi-agent networks.

To overcome these limitations and enable robust identification under active grid conditions, we propose a gray-box identification algorithm with the following primary contributions:
\begin{enumerate}
    \item \textbf{Fully decentralized execution}: Our algorithm operates independently from each VSC using strictly local PCC measurements, requiring zero communication or coordinated timing with the rest of the grid. This inherent decoupling makes the approach highly scalable to grids of arbitrary sources and/ or loads. 
    \item \textbf{Separating active \& passive entities}: From the perspective of the local VSCs, the algorithm extracts the parametric equivalent grid admittance and the non-parametric equivalent grid voltage. We separate the active and passive components so that certain crucial physical information (for example, the equivalent inertia and damping of the grid) can be extracted. By executing this estimation in the frequency domain and using instrumental variables (IV), it is possible to decouple the physical transmission line dynamics from the active control reactions of other VSCs and loads, thus resolving the severe cross-coupling errors that corrupt time-domain identification.
    
    \item \textbf{Guarantees \& Robustness}: We are able to provide theoretical guarantees on the equivalent admittance and voltage estimation errors and offer inferences that could be useful in further reducing the errors. Simulation results validate the algorithm's accuracy using continuous, minimally invasive, wide-band excitations. Our method accurately identifies grid parameters despite non-ideal conditions, i.e., a low-inertia grid (not an infinite bus), heterogeneous lines, multi-agent simultaneous excitations, and withstanding deviations in nominal amplitude and frequency.
\end{enumerate}

\subsection*{Outline}
The rest of this paper is structured as follows: Section \ref{problemsetupsection} describes the modeling and the formal problem statement. Section \ref{algorithmsection} describes the measurements recorded, grid-equivalent map, frequency discrimination (FD), and the algorithm to identify the grid-equivalent model. We show the simulation results on an interconnected $5-$converter network in Section \ref{resultssection}. Section \ref{conclusionsection} concludes our work and presents potential future directions.
\subsection*{Notations}
We denote $s=\mathbf{j}\omega$ as the transfer function variable. We denote $\mathbb{C}^{n\times n}$ to be an $n\times n$ complex matrix, and $\mathbb{R}_{\geq 0}^{n\times n}, \mathbb{R}_{>0}^{n\times n},\mathbb{R}^{n\times n}$ to be a non-negative, positive and a general real-valued $n \times n$ matrix, respectively. A vector of size $N$ with all ones is denoted by $1_N$. The identity matrix of dimension $N$ is denoted by $I_N$. The small-signal value of a variable $x$ is denoted by $\Delta x = x - x^{\rm ss}$, the deviation of $x$ (linearized) from its steady-state value $x^{\rm ss}$. For any complex number $z\in\mathbb{C}$, $\mathbb{R}\{z\},\mathbb{I}\{z\}$ denotes the real and imaginary parts of $z$, respectively.  We let the quantity $[N]$ denote the set $\{1,2,\ldots,N\}$. Within $[N]$, the set of all numbers excluding an entry $i$ (i.e., $\{1,\ldots, i-1,i+1, \ldots,N\}$ ) is given by $[N]\backslash \{i\}$. We denote the quantity $z^\star$ to be the complex conjugate of $z$. We generate Gaussian distributed scalars (vectors) using $\mathcal{N}(a,b)$, where, $a$ represents the mean (mean vector) and $b$ represents the variance (covariance matrix). The expectation operator is denotes by $\mathbb{E}[\cdot]$. We denote $\lVert x \rVert$ to be the Euclidean norm and $\lVert x \rVert_S^2 :=x^\top S x$. We denote the $p^{th}$ entry of a vector $x$ as $[x]_p$. The quantities $\mathrm{Mean} \{ x(k) \}$ and $\mathrm{Var} \{ x(k) \}$ refer to the mean and variance of $x(k)$ across data points $k=0,1,2,\ldots N$, respectively.

\section{Problem Setup}\label{problemsetupsection}
In this section, we discuss the grid-connected converter model, the small-signal network dynamics, grid-equivalent model and present our estimation problem.

\subsection{Grid-connected Converter Model}
\begin{figure}
    \centering
    \includegraphics[scale=0.5]{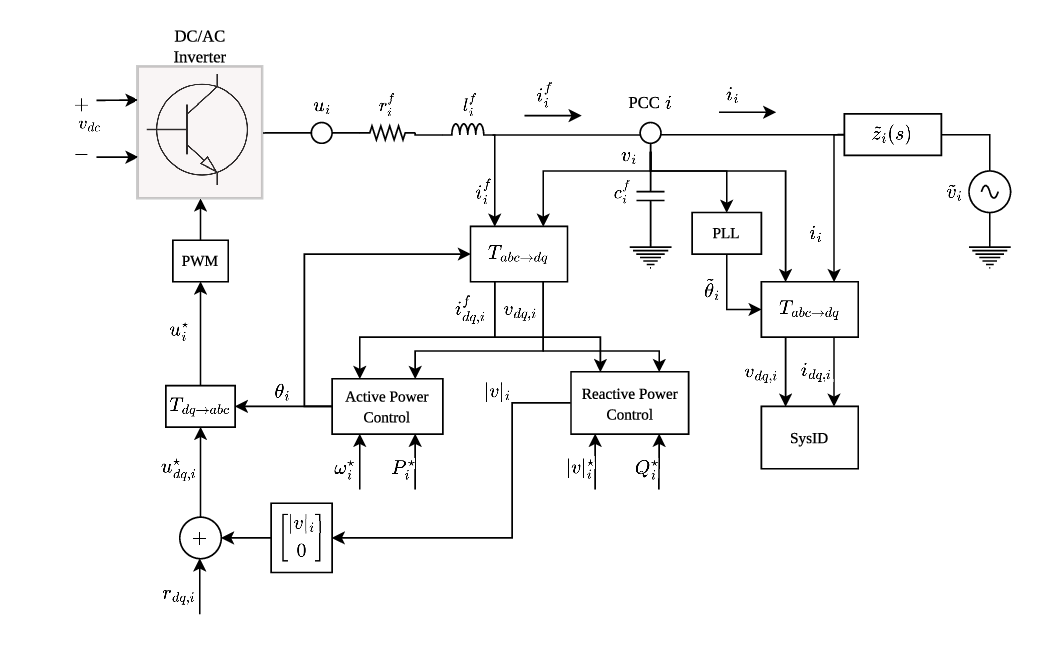}
    \caption{Grid-forming control scheme considered in our work \cite{directvoltageGFM}. Note that the equivalent voltage $\Tilde{v}_i$ is not considered to be a rigid infinite bus.}
    \label{fig:GFM}
\end{figure}

Fig. \ref{fig:GFM} shows the three-phase converter control design considered in our work. The converters are developed in grid-forming (GFM) mode \cite{directvoltageGFM} with droop control, although other GFM control strategies can also be used without any loss of generality. 

A DC-AC inverter transforms power from the DC source with voltage $v_{dc,i}$ into a three-phase AC sinusoidal voltage $u_{i}$ with the help of pulse-width modulation that dictates the frequency of switching of the inverter. We use an LC-filter at the converter side to smoothen the sinusoids. The filter dynamics (in three-phase) is given by:
\begin{align}
        \frac{d}{dt}i_{i}^f &= \frac{1}{l_i^f}u_i - \frac{1}{l_i^f}v_i - \frac{r_i^f}{l_i^f}i_i^f, \  \frac{d}{dt}v_{i} = \frac{1}{c_i^f}(i_i^f - i_{i}), \label{filterdynamics}
\end{align}
where $i_i^f,i_i$ are the currents injected through the filter and at PCC $i$, respectively, $u_i,v_i$ are the voltages at the output of the DC-AC inverter and at PCC $i$, respectively. The filter inductance, resistance, and capacitance are given by $l_i^f,r_i^f,c_i^f$, respectively.

The filter current $i_i^f$ and PCC voltage $v_i$ are then converted to their synchronous rotating reference frame ($dq$ coordinates) using the Park transform \cite[Chapter 3.3]{kundur}
\begin{equation}\label{abctodq}
    i_{dq,i}^f = T_{abc\rightarrow dq0}(\theta_i)i_i^f, v_{dq,i} = T_{abc\rightarrow dq0}(\theta_i)v_i,
\end{equation}
where $\theta_i$ is the instantaneous voltage angle at the converter side. The instantaneous active and reactive powers $P_i = \frac{3}{2}v_{dq,i}^\top i_{dq,i}^f$, $Q_i = \frac{3}{2}v_{dq,i}^\top J i_{dq,i}^f$, are computed, respectively, with $J= \begin{bmatrix}
            0 & -1\\1 & 0
\end{bmatrix}$. The instantaneous active and reactive powers are then passed through low-pass filters:
\begin{equation*}
    \frac{d}{dt}P_{i,filt} = \omega_i^c(P_i - P_{i,filt}) \ \& \
        \frac{d}{dt}Q_{i,filt} = \omega_i^c(Q_i - Q_{i,filt}),
\end{equation*}
where, $\omega_i^c$ is the cut-off frequency for droop control. The droop control laws for voltage frequency and magnitude are then given by:
\begin{align}
    \omega_i = \omega_i^* - k_i^\omega(P_{i,filt}-P_{i}^*),
    |v|_i = |v|_i^* - k_i^v(Q_{i,filt}-Q_{i}^*) \label{droop},
\end{align}
respectively, where, $\omega_i,|v|_i$ are the real-time voltage frequency and magnitude, respectively, with $\omega_i^*,|v|_i^*$ being their desired set-points, and $k_i^\omega,k_i^v$ the tunable droop control parameters. Further, $P_i^*,Q_i^*$ are the desired active and reactive power references, respectively. We then compute the angle at the converter side using the relation $\frac{d}{dt}\theta_i = \omega_i$, which is used in the $abc$ to $dq$ transformation matrix in \eqref{abctodq}.

The pulse-width modulation (PWM) signal in the $dq$ coordinates is given by $u_{dq,i}^* = \begin{bmatrix}
    |v|_i\\
    0
\end{bmatrix} + r_{dq,i}$, where, $r_{dq,i}$ is an excitation signal for the purpose of estimation. We approximate the dynamics of the DC-AC inverter using the "average-value model", thus stating $u_i \approx T_{dq0\rightarrow abc}(\theta_i)[(u_{dq,i}^*)^\top \ 0]^\top$, with $T_{dq0\rightarrow abc}(\theta_i) = T_{abc\rightarrow dq0}(\theta_i)^{-1}$

At each PCC, we record the current injected and the voltage in the $dq$ coordinates, i.e., $i_{dq,i},v_{dq,i}$, respectively. The conversion from $abc$ to $dq$ coordinates is carried out using the grid-angle measured using a phase-locked loop (PLL).

\subsection{Small-signal Network Model}
In this work, we consider a network of multiple three-phase voltage source converters (VSC) (as nodes of the network) interconnected with one another via transmission lines (as edges of the network) of non-zero impedance, as shown in Fig. \ref{fig:nconvertersystem}. The small-signal dynamics between PCC $i$ and $j$ in the complex coordinates (i.e., $\Delta\mathbf{v}_i := \Delta v_{d,i} + \mathbf{j}\Delta v_{q,i}$, $\Delta\mathbf{i}_i := \Delta i_{d,i} + \mathbf{j}\Delta i_{q,i}$, where, $x_{d,i}, x_{q,i}, \text{with } x\in \{ \Delta v,\Delta i\}$ represent the small-signal variables in the $d,q$ coordinates, respectively) \cite{decentralizedstability} is given by $\Delta \mathbf{i}_{ij} = \mathbf{Y}_{ij}(s)(\Delta \mathbf{v}_{i}-\Delta \mathbf{v}_{j})$, where, $\Delta \mathbf{i}_{ij}$ is the small-signal current from converter $i$ to $j$, $\mathbf{Y}_{ij}(s)=\mathbf{Z}_{ij}(s)^{-1} \in \mathbb{C}$ is the admittance (inverse of impedance) of the transmission line between $i,j$ and $\Delta \mathbf{v}_{i},\Delta \mathbf{v}_{j}$ are the small-signal voltages at the (PCC) generated by converters $i,j$, respectively. The current injection at each PCC $i$ in a network of $N$ converters is given by $\Delta \mathbf{i}_{i} = \sum_{j=1,j\neq i}^N \Delta \mathbf{i}_{ij}$, leading to the dynamic grid model $\Delta\mathbf{i} = \mathbf{Y}(s) \Delta \mathbf{v}$:
\begin{equation}\label{gridmodel}
\underset{\Delta \mathbf{i}}{\underbrace{\begin{bmatrix}
            \Delta \mathbf{i}_{1}\\
            \Delta \mathbf{i}_{2}\\
            \vdots\\
            \Delta \mathbf{i}_{N}
        \end{bmatrix}}}
         = \underset{\mathbf{Y}(s)}{\underbrace{\begin{bmatrix}
            \mathbf{Y}_{11}(s) & -\mathbf{Y}_{12}(s) & \ldots & -\mathbf{Y}_{1N}(s)\\
            -\mathbf{Y}_{21}(s) & \mathbf{Y}_{22}(s) & \ldots & -\mathbf{Y}_{2N}(s)\\
            \vdots & \vdots & \ddots & \vdots\\
            -\mathbf{Y}_{N1}(s) & -\mathbf{Y}_{N2}(s) & \ldots & \mathbf{Y}_{NN}(s) 
        \end{bmatrix}}}
        \underset{\Delta\mathbf{v}}{\underbrace{\begin{bmatrix}
            \Delta \mathbf{v}_{1}\\
            \Delta \mathbf{v}_{2}\\
            \vdots\\
            \Delta \mathbf{v}_{N}
        \end{bmatrix}}}
        ,
    \end{equation}
where, $\mathbf{Y}_{ii}(s) = \sum_{j=1,j\neq i}^N \mathbf{Y}_{ij}(s), \ \forall \ i \in [N]$.

\subsection{Grid-equivalent Model}
\begin{figure}[h]
    \centering
    \scalebox{0.8}{\begin{circuitikz}[>=Stealth]
    \ctikzset{bipoles/length=1cm}
    \ctikzset{sources/scale=1.5}
    
    \draw (0,0) to[esource, *-*] (0,2);
    \node at (0,1) {$\Delta \mathbf{v}_i$};

    \draw (0,2) to[short, i=$\Delta\mathbf{i}_i$] (1.5,2) -- (4.5,2);
    \node[draw, fill=white, inner xsep=5pt, inner ysep=4pt] at (2.5,2) {$\mathbf{\Tilde{Y}}_i(s, \theta_i)$};
          
    \draw (4.5,2) to[esource] (4.5,0);
    \node at (4.5,1) {$\Delta \Tilde{\mathbf{v}}_{i}$};
    
    \draw (4.5,0) -- (0,0);
    
    \end{circuitikz}}
    \caption{Thevenin equivalent of Fig. \ref{fig:nconvertersystem} from the perspective of PCC $i$.}
    \label{fig:thevenin}
\end{figure}
To derive a grid-equivalent model from the perspective of each VSC, we consider the Thevenin equivalent from VSC $i$ of the grid, as seen in Fig. \ref{fig:thevenin}. We now partition the grid model in \eqref{gridmodel} into two components: one corresponding to the local PCC $i$ and the rest of the grid, denoted by $-i$:
\begin{equation}\label{gridpartition}
    \begin{bmatrix}
            \Delta \mathbf{i}_i\\
            \Delta \mathbf{i}_{-i}
        \end{bmatrix}
         = \begin{bmatrix}
             \mathbf{Y}_{ii}(s) & \mathbf{Y}_{i,-i}(s)\\
             \mathbf{Y}_{i,-i}^\top(s) & \mathbf{Y}_{-i,-i}(s)
         \end{bmatrix} \begin{bmatrix}
            \Delta \mathbf{v}_i\\
            \Delta \mathbf{v}_{-i}
        \end{bmatrix},
\end{equation}
where, $\Delta\mathbf{i}_{-i}:= [\Delta \mathbf{i}]_j \in \mathbb{C}^{N-1 \times1}: j\in[N]\backslash\{i\}, \Delta\mathbf{v}_{-i}:= [\Delta \mathbf{v}]_j \in \mathbb{C}^{N-1 \times1}: j\in[N]\backslash\{i\} $ represents the vector of the small-signal current and voltage at all PCCs except that corresponding to $i$, $\mathbf{Y}_{i,-i} := -[\mathbf{Y}]_{ij} \in \mathbb{C}^{1 \times N-1}: j\in[N]\backslash\{i\}$ is a vector of the admittances linking PCC $i$ and the others in the grid, and $\mathbf{Y}_{-i,-i}(s) = [\mathbf{Y}]_{lj} \in \mathbb{C}^{N-1 \times N-1}: l,j \in [N]\backslash \{i\}$ represents the grid admittance matrix $\mathbf{Y}(s)$ with the row and column entries corresponding to PCC $i$ being removed.

The PCC voltages $\Delta \mathbf{v}_i$ are related to the small-signal internal voltages $\Delta \mathbf{u}_i$ (see Fig. \ref{fig:GFM}) with $\Delta\mathbf{v}_i = \Delta \mathbf{u}_i - \mathbf{Z}_i^f(s) \Delta \mathbf{i}_i$, where $\mathbf{Z}_i^f(s) := r_i^f + (s+\mathbf{j})l_i^f$ represents the filter impedance\footnote{We ignore the contribution of the capacitor in our theoretical analysis due to (i) being low in magnitude in traditional GFM converters, (ii) its influence being prominent only at higher frequencies, which is not of interest in this work.}. With the filter dynamics included in \eqref{gridpartition} ({for more details, please refer to Appendix \ref{eqadmvoltcomp}}), we derive the following relation for the grid equivalent model: 
\begin{equation}\label{thevenin}
    \Delta \mathbf{i}_i(s) = \mathbf{\Tilde{Y}}_i(s)(\Delta\mathbf{v}_i(s) - \Delta \Tilde{\mathbf{v}}_i(s)),
\end{equation}
where, the equivalent admittance and voltage\footnote{Henceforth, we drop the argument $(s)$ in the equivalent admittance and voltage due to space constraints} are computed as:

\footnotesize
\begin{align}
    \mathbf{\Tilde{Y}}_i &= \mathbf{Y}_{ii} - \mathbf{Y}_{i,-i}\mathbf{Z}_{-i}^f(I_{N-1} + Y_{-i,-i}\mathbf{Z}_{-i}^f)^{-1}\mathbf{Y}_{i,-i}^\top, \label{equivalentadmittance_full}\\
    \Delta \Tilde{\mathbf{v}}_i & = -\frac{\mathbf{Y}_{i,-i}}{\mathbf{\Tilde{Y}}_i}( 1_{N-1} - \mathbf{Z}_{-i}^f(I_{N-1} + Y_{-i,-i}\mathbf{Z}_{-i}^f)^{-1}\mathbf{Y}_{-i,-i})\Delta \mathbf{u}_{-i}, \label{equivalentvoltage_full}
\end{align}
\normalsize respectively, with ${\mathbf{Z}_{-i}^f} := \mathrm{diag}[\mathbf{Z}_1^f, \ldots, \mathbf{Z}_{i-1}^f, \mathbf{Z}_{i+1}^f, \ldots \mathbf{Z}_N^f] \in \mathbb{C}^{N-1\times N-1}$ being a diagonal matrix comprising the filter impedance of all the VSCs except that of $i$ and $\Delta\mathbf{u}_{-i}:= [\Delta \mathbf{u}]_j \in \mathbb{C}^{N-1 \times1}: j\in[N]\backslash\{i\}$ representing the vector of internal voltages of all VSCs except that of $i$.
\begin{remark}[Complex coordinates]\label{complexcoordinatesremark}
    Note that if the active (i.e., the equivalent voltage $\Delta \Tilde{\mathbf{v}}_i$) and passive (i.e., the equivalent admittance $\mathbf{Y}_i$) entities were not distinguished, and a parametric approach was made to fit the overall equivalent model like in \cite{mimoidentificationverena,ultrafast}, we would not have an SISO setup like in \eqref{thevenin}, rather a $1\times 2$ MIMO transfer function to be estimated \cite{ultrafast}, owing to the $dq$-asymmetry arising from the active components.
\end{remark}
\subsection{Problem Statement}
Prior to stating the formal problem statement, we make the following assumptions on the line and filter impedances:
\begin{assumption}[Line \& filter impedances]\label{RLassumption}
    The interconnecting lines and the filters have the following properties:
    \begin{enumerate}
        \item[(i)] All interconnecting lines are resistive-inductive, with $r_{ij},l_{ij}$ (in p.u.) representing the resistance and inductances between PCC $i$ and $j$, respectively.
        \item[(ii)] The lines are homogeneous, i.e. $\rho_{ij} = r_{ij}/l_{ij} = \rho << 1, \ \forall \ i,j\in [N]$.
        \item[(iii)] The filters associated with each VSC are also homogeneous, i.e., $\rho_i^f = r_i^f/l_i^f = \rho^f << 1, \ \forall \ i\in [N]$.
    \end{enumerate}
\end{assumption}
Assumption \ref{RLassumption}(i) is commonly used in the power systems literature \cite{mininvasiveEKF,decentralizedstability,communicationfree}. Assumptions \ref{RLassumption}(ii),(iii) are used merely to fix a parametric structure on the impedance and to be able to provide theoretical guarantees, and are not a hard requirement\footnote{Nevertheless, the transmission lines and converter filters tend to be predominantly inductive, as evidenced in \cite{decentralizedstability}}. Our algorithm works efficiently for heterogeneous networks as well (refer to Section \ref{resultssection}).

With the above assumptions, we obtain the following simplified relations for the individual entries in \eqref{equivalentadmittance_full}, \eqref{equivalentvoltage_full}:
\begin{align}\label{simplifiedeqns}
    \mathbf{Y}_{ii} = \frac{1}{s + \mathbf{j} + \rho}\sum_{m=1,m\neq i}^N \gamma_{im} \ & , \ \mathbf{Y}_{i,-i} = \frac{1}{s +\mathbf{j} + \rho}\Gamma_{i,-i}, \\
    \mathbf{Y}_{-i,-i} =  \frac{1}{s + \mathbf{j} + \rho} \Gamma_{-i,-i} \ & , \ \mathbf{Z}_i^f = ({s + \mathbf{j} + \rho^f})l_i^f, \nonumber
\end{align}
where, $\gamma_{im} = 1/l_{im}$ is the inverse of the inductance between PCC $i$ and $m$. The matrix comprising of $\gamma_{im}$ is given by the Laplacian matrix $\Gamma$ with its diagonal entries being $\gamma_{ii} = \sum_{m=1,m\neq i}^N \gamma_{im} \ , \ i \in [N]$. This leads to $\Gamma_{i,-i} = [\Gamma]_{im}\in \mathbb{R}_{>0}^{1\times N-1}: m \in[N]\backslash \{i\}$, the vector of the inverse of inductance values linking PCC $i$ and the others in the grid and $\Gamma_{-i,-i} = [\Gamma]_{lj} \in \mathbb{R}_{>0}^{N-1 \times N-1}: l,j \in [N]\backslash \{i\}$ representing the matrix $\Gamma$ with the row and column entries corresponding to PCC $i$ being removed. Finally, with  ${L_{-i}^f} := \mathrm{diag}[l_1^f, \ldots, l_{i-1}^f, l_{i+1}^f, \ldots, l_N^f] \in \mathbb{R}_{>0}^{N-1\times N-1}$ representing the diagonal matrix of the inductance of all the filters with the exception of VSC $i$, we have the following simplified relation for the equivalent admittance and voltage (for more details on the simplification, please refer to Appendix \ref{eqadmvoltcompsimple}), respectively:
\begin{equation}\label{equivalentmodelsimple}
    \Tilde{\mathbf{Y}}_i = \frac{\Tilde{\gamma}_i}{s + \mathbf{j} + \rho} \ , \ \Delta \Tilde{\mathbf{v}}_i = - \frac{1}{\Tilde{\gamma}_i}\Tilde{\Gamma}_{i,-i}\Delta \mathbf{u}_{-i},
\end{equation}
where, $\Tilde{\gamma}_i = \gamma_{ii} - \Gamma_{i,-i}L_{-i}^f(I_{N-1} + \Gamma_{-i,-i}L_{-i}^f)^{-1}\Gamma_{i,-i}^\top$ and $\Tilde{\Gamma}_{i,-i} = \Gamma_{i,-i}(1_{N-1} - L_{-i}^f(I_{N-1} + \Gamma_{-i,-i}L_{-i}^f)^{-1}\Gamma_{-i,-i})$. We now note the following lemma on the equivalent admittance and voltage for heterogeneous transmission lines and filters that are not necessarily predominantly inductive:
\begin{lemma}[Robustness of equivalent model to heterogeneous lines]\label{heterolines}
    Let Assumption \ref{RLassumption}(i) alone hold true. Consider the transmission lines and the converter filters to be heterogeneous and not necessarily predominantly inductive. The relative order of the equivalent admittance in \eqref{equivalentadmittance_full} is unity and that of the transfer function vector $\frac{\mathbf{Y}_{i,-i}}{\mathbf{\Tilde{Y}}_i}( 1_{N-1} - \mathbf{Z}_{-i}^f(I_{N-1} + Y_{-i,-i}\mathbf{Z}_{-i}^f)^{-1}\mathbf{Y}_{-i,-i})$ in \eqref{equivalentvoltage_full} is zero.
\end{lemma}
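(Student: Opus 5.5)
The plan is to track the behavior of every quantity as $|s|\to\infty$ and read off relative orders (denominator degree minus numerator degree) from the leading asymptotic terms. Under Assumption \ref{RLassumption}(i) alone, each line admittance is $\mathbf{Y}_{ij}(s) = 1/(r_{ij} + (s+\mathbf{j})l_{ij})$, which is strictly proper with relative order one. Each filter impedance $\mathbf{Z}_m^f = r_m^f + (s+\mathbf{j})l_m^f$ is a polynomial of degree one. Hence $\mathbf{Y}_{ii},\mathbf{Y}_{i,-i},\mathbf{Y}_{-i,-i}$ all behave like $s^{-1}$ times a constant Laplacian-type matrix; more precisely, $s\mathbf{Y}(s)\to \Gamma$ as $s\to\infty$, with $\Gamma$ the Laplacian built from $\gamma_{im}=1/l_{im}$ and no homogeneity assumed. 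Likewise, $s^{-1}\mathbf{Z}_{-i}^f\to L_{-i}^f$.

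First, I would consider the product $\mathbf{Y}_{-i,-i}\mathbf{Z}_{-i}^f$, which is biproper and tends to $\Gamma_{-i,-i}L_{-i}^f$. Therefore $M(s):=(I_{N-1}+\mathbf{Y}_{-i,-i}\mathbf{Z}_{-i}^f)^{-1}\to (I_{N-1}+\Gamma_{-i,-i}L_{-i}^f)^{-1}$. This limit exists because $I+\Gamma_{-i,-i}L^f_{-i}$ is invertible: $\Gamma_{-i,-i}$ is a grounded Laplacian of a connected graph, hence positive definite, and $L^f_{-i}\succ 0$, so $\Gamma_{-i,-i}L^f_{-i}$ is similar to a positive definite matrix and has positive eigenvalues. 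Substituting into \eqref{equivalentadmittance_full} gives
\[
s\,\Tilde{\mathbf{Y}}_i \to \gamma_{ii} - \Gamma_{i,-i}L^f_{-i}(I_{N-1}+\Gamma_{-i,-i}L^f_{-i})^{-1}\Gamma_{i,-i}^\top = \Tilde{\gamma}_i ,
\]
since $\mathbf{Y}_{i,-i}\mathbf{Z}^f_{-i}M\mathbf{Y}_{i,-i}^\top \sim s^{-1}\cdot s\cdot s^{-1}$.

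Relative order one then follows once $\Tilde{\gamma}_i\neq 0$, and this is the step I expect to be the main obstacle. I would show $\Tilde{\gamma}_i>0$ by recognizing it as a Schur complement (Kron reduction). Using $L(I+\Gamma L)^{-1}=(L^{-1}+\Gamma_{-i,-i})^{-1}$ with $L=L^f_{-i}$, we get
\[
\Tilde{\gamma}_i=\gamma_{ii}-\Gamma_{i,-i}\bigl((L^f_{-i})^{-1}+\Gamma_{-i,-i}\bigr)^{-1}\Gamma_{i,-i}^\top .
\]
This is the Schur complement of the matrix $\Gamma+\mathrm{diag}(0,(L^f_{-i})^{-1})$, which is a Laplacian with added positive grounding at the nodes other than $i$. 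For a connected graph this matrix is positive definite, so its Schur complement is positive. Rationality of $\Tilde{\mathbf{Y}}_i$ is immediate, since all operations are rational.

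For the voltage transfer vector, write it as $\Tilde{\mathbf{Y}}_i^{-1}\mathbf{Y}_{i,-i}\,(I - \mathbf{Z}^f_{-i}M\mathbf{Y}_{-i,-i})$; I read the paper's $1_{N-1}$ there as the identity so that the dimensions are consistent. The factor $\Tilde{\mathbf{Y}}_i^{-1}\sim s/\Tilde{\gamma}_i$ and $\mathbf{Y}_{i,-i}\sim s^{-1}\Gamma_{i,-i}$, so their product tends to $\Gamma_{i,-i}/\Tilde{\gamma}_i$. The bracket is biproper and tends to $I-L^f_{-i}(I+\Gamma_{-i,-i}L^f_{-i})^{-1}\Gamma_{-i,-i}$. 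The product therefore converges to the finite constant $\Tilde{\Gamma}_{i,-i}/\Tilde{\gamma}_i$, which gives relative order zero (at most zero, i.e.\ proper).

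To show the relative order is exactly zero, it remains to verify that this limit is nonzero. Rewriting the bracket as $(I+L\Gamma_{-i,-i})^{-1}$, the limit becomes $\Gamma_{i,-i}(I+L\Gamma_{-i,-i})^{-1}/\Tilde{\gamma}_i$. This is nonzero because $\Gamma_{i,-i}\neq 0$ (node $i$ is connected) and $(I+L\Gamma_{-i,-i})^{-1}$ is invertible.
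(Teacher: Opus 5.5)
Your proof is correct. It has the same high-frequency skeleton as the paper's proof in Appendix~\ref{heteroproof}, but it differs in a substantive way. The paper argues by a relative-degree calculus: products add relative degrees, ``the sum of two transfer functions of the same relative degree retains that relative degree,'' and ``the inverse of a proper transfer function matrix is also proper.'' You instead compute the leading coefficients as $s\to\infty$ explicitly and show they are nonzero.

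That computation is exactly what the paper's rules leave unjustified. The second and third rules fail in general: leading terms can cancel, and the inverse of a proper matrix whose value at infinity is singular is improper. The decisive step $\Tilde{\mathbf{Y}}_i=\mathbf{Y}_{ii}-(\cdots)$ is a difference of two $\mathcal{O}(s^{-1})$ terms, so the paper never excludes a relative degree larger than one. Its ``similar argument'' for the voltage map also needs two further facts: that $\Tilde{\mathbf{Y}}_i$ has relative degree exactly one, and that the bracket $I_{N-1}-\mathbf{Z}^f_{-i}(\cdot)^{-1}\mathbf{Y}_{-i,-i}$ has a nonzero limit. Your proof supplies each missing fact:
\begin{itemize}
\item invertibility of $I_{N-1}+\Gamma_{-i,-i}L^f_{-i}$, which gives biproperness;
\item the Kron-reduction/Schur-complement proof of $\Tilde{\gamma}_i>0$;
\item the push-through identity, which gives the invertible limit $(I_{N-1}+L^f_{-i}\Gamma_{-i,-i})^{-1}$.
\end{itemize}

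Beyond rigor, your route shows that the high-frequency gains of the full model are exactly the constants $\Tilde{\gamma}_i$ and $\Tilde{\Gamma}_{i,-i}/\Tilde{\gamma}_i$ of the simplified model \eqref{equivalentmodelsimple}, with no homogeneity needed. It also proves the positivity $\Tilde{\gamma}_i>0$ that the paper later invokes without proof in the passivity-awareness criterion. The paper's route is shorter, but as written it establishes neither relative degree exactly. Your reading of $1_{N-1}$ as $I_{N-1}$ is the one forced by the derivation in Appendix~\ref{eqadmvoltcomp}.

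You may also want the entrywise version of the second claim, namely that each component of the voltage map (not just the vector) has relative degree zero. Write $(I_{N-1}+L^f_{-i}\Gamma_{-i,-i})^{-1}=\bigl((L^f_{-i})^{-1}+\Gamma_{-i,-i}\bigr)^{-1}(L^f_{-i})^{-1}$. Here $(L^f_{-i})^{-1}+\Gamma_{-i,-i}$ is a symmetric nonsingular M-matrix, so its inverse is entrywise nonnegative, and strictly positive between nodes in the same connected component of the network with node $i$ removed. All entries of $\Gamma_{i,-i}$ have the same sign, so no cancellation occurs. Hence, for a connected network, every entry of the limit vector is nonzero.
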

\begin{proof}
    Refer to Appendix \ref{heteroproof}.
\end{proof}
The above lemma shows that even if Assumptions(ii),(iii) did not hold true, the relative degree of $\Tilde{\mathbf{Y}}_i$ is always one, exhibiting a similar Bode characteristic to the first-order approximation in \eqref{equivalentmodelsimple}, thus justifying the first-order parameteric setting for the equivalent admittance.

From the perspective of PCC $i$, it is desired to obtain a reliable representation of the entire equivalent grid, with only local information (i.e. $\Delta \mathbf{i}_i,\Delta \mathbf{v}_i$). The model parameters $\theta_i := \begin{bmatrix}
    \rho & \Tilde{\gamma}_i
\end{bmatrix}^\top$ and the small-signal equivalent grid voltage $\Delta \Tilde{\mathbf{v}}_{i}$ (across frequencies), are unknown. This leads to the following formal problem statement:
\begin{problem}\label{mainproblemstatement}
    Design a minimally invasive, decentralized gray-box identification algorithm that estimates $\theta_i, \Tilde{\mathbf{v}}_{i} \ \forall \ i\in[N]$ from each converter without interrupting grid operations, withstanding deviations from nominal conditions. Moreover, the algorithm should be able to be carried out in parallel across all converters.
\end{problem}
Before describing the solution to the above problem, we make the following justification regarding our gray-box approach (i.e., a parametric and non-parametric representation for the equivalent admittance and voltage, respectively):
\begin{remark}[Gray-box approach]\label{grayboxremark}
     The equivalent admittance $\Tilde{\mathbf{Y}}_i(\mathbf{j}\omega)$ represents an aggregated representation of the passive transmission lines connected to PCC $i$ with a well-defined, low-relative-order mathematical structure, making it convenient for parametric estimation. On the other hand, the equivalent voltage $\Tilde{\mathbf{v}}_{i}(\mathbf{j}\omega)$ captures an aggregated representation of the closed-loop reactions of all other VSCs in the grid. Since the exact control techniques (for example, the type of GFM, grid-following control algorithms) or the dynamical structure of synchronous machines and loads are completely unknown from the perspective of PCC $i$, enforcing a parametric model on $\Tilde{\mathbf{v}}_{i}(\mathbf{j}\omega)$ would not be scalable. By estimating the equivalent voltage in a non-parametric manner, the local VSC can accurately map the grid's frequency-dependent stiffness without making restrictive assumptions about the grid. In addition, assuming an overall parametric model for the net equivalent impedance as in \cite{mimoidentificationverena,ultrafast} would not be applicable here since the grid is not assumed to be rigid (i.e., an infinite bus) and is realistically composed of other low-inertia sources that would excite simultaneously.
\end{remark}

\section{Decentralized Identification Algorithm}\label{algorithmsection}
In this section, we propose a frequency domain approach to solve Problem \ref{mainproblemstatement}. We discuss the measurements recorded, present the estimation framework, the frequency domain approach to estimation, and finally, present our algorithm.
\subsection{Measurements}\label{excitations_methods}
As seen in Fig. \ref{fig:GFM}, at PCC $i$, we measure $i_{dq,i},v_{dq,i}$, the local current and voltage, respectively in the $dq$ coordinates, transformed with the angle measured using a PLL at the PCC. We then compute their deviations from their steady-state value\footnote{In the stochastic setting with multiple VSCs exciting simultaneously, the temporal mean at steady-state is considered as the steady-state value} $\Delta i_{dq,i} = i_{dq,i} - i_{dq,i}^{ss}$ and $\Delta v_{dq,i} = v_{dq,i} - v_{dq,i}^{ss}$, respectively. The signals are then converted to their complex coordinate counterparts, i.e. $\Delta \mathbf{i}_i := \Delta i_{d,i} + \mathbf{j}\Delta i_{q,i}$, $\Delta \mathbf{v}_i := \Delta v_{d,i} + \mathbf{j}\Delta v_{q,i}$. We then sample these signals with a sampling frequency $f_s$ and compute their fast Fourier transform (FFT) values\footnote{Note that although the estimation framework is in the continuous domain, we use the FFT as an approximation for the continuous-time Fourier transform, since the frequency response in the continuous domain and its sampled discrete-time counterpart are identical for frequencies significantly lower than the Nyquist frequency.} $\Delta \mathbf{i}_i(\mathbf{j}\omega_k),\Delta \mathbf{v}_i(\mathbf{j}\omega_k)$, at frequencies $\omega_k = 2\pi (f_s/N_s)k$, with $f_s$ being the sampling frequency of the signals $\Delta \mathbf{i}_i,\Delta \mathbf{v}_i$ and $N_s$ the total number of samples, and $k\in\{0,1,2\ldots,N_s\}$ the iteration count.

\subsection{Grid-Equivalent Model Estimation Framework}\label{estimationframework}
We compute the ratio $h_i(\mathbf{j}\omega_k) = \Delta \mathbf{i}_i(\mathbf{j}\omega_k)/\Delta \mathbf{v}_i(\mathbf{j}\omega_k)$ at each $\omega_k$. Dividing both sides of \eqref{thevenin} by $\mathbf{v}_i(\mathbf{j}\omega_k)$, we obtain the following map
\begin{equation}\label{thevinin1}
    h_i(\mathbf{j}\omega) = \Tilde{\mathbf{Y}}_i(\mathbf{j}\omega)(1-\Tilde{h}_i(\mathbf{j}\omega)),
\end{equation}
where, $\Tilde{\mathbf{Y}}_i(\mathbf{j}\omega) = \frac{\Tilde{\gamma}_i}{\rho + \mathbf{j}(\omega+1)}$ is the equivalent admittance and $\Tilde{h}_i(\mathbf{j}\omega) = \Delta \mathbf{\Tilde{v}}_i(\mathbf{j}\omega)/\Delta \mathbf{v}_i(\mathbf{j}\omega)$.

We then write $h_i,\Tilde{h}_i$ explicitly in terms of its real and imaginary components, i.e., $h_i(\mathbf{j}\omega) = \mathbb{R}\{h_i(\mathbf{j}\omega)\} + \mathbf{j}\mathbb{I}\{h_i(\mathbf{j}\omega)\}, \Tilde{h}_{i}(\mathbf{j}\omega) = \mathbb{R}\{ \Tilde{h}_i(\mathbf{j}\omega)\} + \mathbf{j}\mathbb{I}\{ \Tilde{h}_i(\mathbf{j}\omega)\}$, respectively. Re-arranging the terms in \eqref{thevinin1}, we obtain the following bilinear map

\footnotesize
\begin{equation}\label{maineqn}
    \underset{z_i(\omega)}{\underbrace{\begin{bmatrix}
            -(\omega+1)\mathbb{I}\{h_i(\mathbf{j}\omega)\}\\
            (\omega+1)\mathbb{Re}\{h_i(\mathbf{j}\omega)\}
        \end{bmatrix}}} 
        =  \underset{H_i(\omega)}{\underbrace{\begin{bmatrix}
            -\mathbb{Re}\{h_i(\mathbf{j}\omega)\} & 1\\
            -\mathbb{I}\{h_i(\mathbf{j}\omega)\} & 0
        \end{bmatrix}}}\underset{\theta_i}{\underbrace{\begin{bmatrix}
            \rho\\
            \Tilde{\gamma}_i
        \end{bmatrix}}} -\underset{[\theta_i]_2}{\underbrace{\Tilde{\gamma}_i}}\underset{d_i(\omega)}{\underbrace{\begin{bmatrix}
            \mathbb{Re}\{\Tilde{h}_i(\mathbf{j}\omega)\}\\
            \mathbb{I}\{\Tilde{h}_i(\mathbf{j}\omega)\}
        \end{bmatrix}}},
\end{equation}
\normalsize where, $\theta_i\in \mathbb{R}_{\geq 0}^{2\times 1}$ is the parameter vector corresponding to $\Tilde{\mathbf{Y}}_i(\mathbf{j}\omega)$ and is constant, while $d_i(\omega) \in \mathbb{R}^{2\times 1}$ represents the real and imaginary parts of $\Tilde{h}_i$, and varies across frequencies.

With the model in \eqref{maineqn}, it is desired to estimate $\theta_i$ and $d_i(\omega_k), \ \forall \omega_k\in\{0,2\pi f_s/N_s,4\pi f_s/N_s,\ldots \pi f_s/N_s\}$, for all converters $i\in[N]$ in a parallel and decentralized manner, leading to the following optimization problem

\scriptsize
\begin{equation}\label{mainopt}
    \underset{[\theta_i,\{d_i(\omega_k)\}_{k=0}^{N_s}]}{\arg}\min \sum_{k=1}^{N_s} \Bigg\lVert z_i(\omega_k) - \begin{bmatrix}
        H_i(\omega_k) & [\theta_i]_2
    \end{bmatrix}\begin{bmatrix}
        \theta_i\\
        d_i(\omega_k)
    \end{bmatrix}\Bigg\rVert
    ^2_{R(\omega_k)^{-1}},
\end{equation}
\normalsize where, $R(\omega_k)$ is the measurement noise covariance of $z_i(\omega_k)$, and $\lVert x\rVert_S^2 := x^\top Sx$.
\begin{remark}[Non-convexity]\label{mainoptremark}
    Since the estimation model \eqref{maineqn} is bilinear in terms of $\theta_i,d_i$, this results in the above optimization problem being non-convex. In addition, $d_i$ varies across frequencies. Although the above problem has a feasible solution, a simultaneous estimation algorithm using gradient descent or recursive least-squares across frequency data points would yield sub-optimal results.
\end{remark}
Thus, we carry out the estimation of the overall equivalent model in two steps: (a) Parameter estimation by disregarding frequency points where the coupling $d_i(\omega)$ is significant, thereby converting \eqref{mainopt} to a linear regression problem, after which, (b) Equivalent voltage estimation is carried out.

\subsection{Parameter Estimation Algorithm}\label{parameterest}
For the sake of parameter estimation, it must be noted that the small-signal equivalent voltage $\Delta \Tilde{\mathbf{v}}_i$ (and hence, $d_i(\omega)$ in \eqref{maineqn}) cannot be treated as a mere disturbance/ exogenous input. This is owing to the dependency on the PCC voltages of all other VSCs, which in turn depend on the voltage at PCC $i$ through the injected current, thus exhibiting a non-trivial closed-loop behaviour. Transients at the local PCC actively trigger the control loops of neighboring converters, causing the equivalent voltage $\Delta \Tilde{\mathbf{v}}_i$ to be correlated with the local voltage $\Delta \mathbf{v}_i$. Standard least-squares estimation towards identifying $\Tilde{\mathbf{Y}}_i$ relies on the assumption that the input $\Delta \mathbf{v}_i$ is statistically independent of the "noise" ($\Delta \Tilde{\mathbf{v}}_i$ in this case), i.e., exogenous. However, in our closed-loop setup, it is not the case: applying direct identification methods, such as ordinary least-squares or a raw empirical transfer function estimate (ETFE) results in severe asymptotic bias (see Fig. \ref{fig:ETFE} in Section \ref{resultssection} for empirical evidence). The estimator absorbs the active control dynamics of the neighboring VSCs into the local admittance estimate, rendering the extraction of $\Tilde{\mathbf{Y}}_i$ highly inaccurate.

To minimize the closed-loop correlations, we employ instrumental variables (IV) in the frequency domain \cite{IVfreq} by selecting the local wide-band excitation signal in the complex coordinates $\mathbf{r}_i = r_{d,i} + \mathbf{j}r_{q,i}$ as the instrument. We now make an important assumption on the wide-band excitation signals of the VSCs.
\begin{assumption}[Orthogonal excitations]\label{excitationsassumption}
    The wide-band excitations of all the VSCs are mutually orthogonal, i.e., $\mathbb{E}_{t\in\{0,1,2\ldots,N_s\}}[\mathbf{r}_i^\star(t)\mathbf{r}_j(t)]\approx 0, \ \forall \ i,j \in [N], i\neq j$.
\end{assumption}
Although the above assumption implies statistical independence of all the exciting signals $\mathbf{r}_i$, it does not require centralized communication. On the contrary, in order for the exciting signals to be correlated, one requires centralized coordination. In a large-scale setup, since each VSC independently generates excitation signals, it is highly likely that the excitations are almost always out of phase, especially taking delays, converter and line harmonics into consideration.

Pre-multiplying $\mathbf{r}_i(s)$ on both sides of \eqref{thevenin} and taking the expectation, we obtain the following relation:
\begin{equation}\label{thevenin_IV}
    S_{i}^{RI}(\mathbf{j}\omega) = \Tilde{\mathbf{Y}}_i(\mathbf{j}\omega)( S_{i}^{RV}(\mathbf{j}\omega) -  S_{i}^{R\Tilde{V}}(\mathbf{j}\omega)),
\end{equation}
where, $S_{i}^{RI}(\mathbf{j}\omega) = \mathbb{E}[\mathbf{r}_i^\star(\mathbf{j}\omega)\Delta \mathbf{i}_i(\mathbf{j}\omega)]$, $S_{i}^{RV}(\mathbf{j}\omega) = \mathbb{E}[\mathbf{r}_i^\star(\mathbf{j}\omega)\Delta \mathbf{v}_i(\mathbf{j}\omega)]$ and $S_{i}^{R\Tilde{V}}(\mathbf{j}\omega) = \mathbb{E}[\mathbf{r}_i^\star(\mathbf{j}\omega)\Delta \Tilde{\mathbf{v}}_i(\mathbf{j}\omega)]$ represent the cross-spectral densities of the local wide-band excitation $\mathbf{r}_i$ with the small-signal current $\Delta \mathbf{i}_i$ and voltage $\Delta \mathbf{v}_i$ at PCC $i$, respectively, with $\mathbf{r}_i(\mathbf{j}\omega)$ representing the FFT of $\mathbf{r}_i(t)$. By projecting the instrument $\mathbf{r}_i$ onto the measured current and voltage via the cross-spectral density, we remove all uncorrelated noise. While $\mathbf{r}_i$ still remains correlated with the grid voltage $\Tilde{\mathbf{v}}_i$ due to the closed-loop effects (i.e., $\mathbb{E}[\mathbf{r}_i^\star\Tilde{\mathbf{v}}_i]\neq 0$), this pre-processing improves the possibility for the estimation to be separated between the passive admittance $\Tilde{\mathbf{Y}}_i(\mathbf{j}\omega)$ and the active contribution of the grid $\Tilde{\mathbf{v}}_i$.

We compute the spectral densities, and hence, the ratio $h_i'(\mathbf{j}\omega_k) = S_i^{RI}(\mathbf{j}\omega_k)/S_i^{RV}(\mathbf{j}\omega_k)$ at each $\omega_k$\footnote{Henceforth, we replace $\omega_k$ with $k$ for notational convenience. Thus, any term $x(\mathbf{j}\omega_k)$ is to be replaced with $x(k)$.}. We then solve the following least-squares problem:
\begin{equation}\label{theta_opt}
    \hat{\theta}_i =\underset{\theta_i}{\arg}\min \sum_{k=0}^{N_s} \Bigg\lVert z_i'(k) - H_i'(k)\theta_i\Bigg\rVert_{R_\theta(k)^{-1}}
    ^2,
\end{equation}
where, $z_i'(k) := \begin{bmatrix}
            -(\omega_k+1)\mathbb{I}\{h_i'(\mathbf{j}\omega_k)\}\\
            (\omega_k+1)\mathbb{Re}\{h_i'(\mathbf{j}\omega_k)\}
        \end{bmatrix}$ and $H_i'(k) := \begin{bmatrix}
            -\mathbb{Re}\{h_i'(\mathbf{j}\omega_k)\} & 1\\
            -\mathbb{I}\{h_i'(\mathbf{j}\omega_k)\} & 0
\end{bmatrix}$ represents the output and $H_i(\omega)$ of our measurement model framework in \eqref{maineqn}, but with $h_i$ being replaced by $h_i'$, $R_\theta(k) = \sigma_\theta(k)I_2$ is the measurement noise covariance associated with $z_i'(k)$. We set the measurement noise variance $\sigma_\theta(k):= c_1 + c_2(1-\mathcal{W}_i(k))$, with $c_2>>c_1$. By setting $\mathcal{W}_i(k) = 0$, this informs the unreliability of frequency point $\omega_k$ towards parameter estimation. In this manner, we avoid the errors associated with frequency data points where the closed-loop effects from the rest of the grid appear to cloud data for parameter estimation.

We now explain the frequency discrimination (FD) criteria (using the pseudo-code in Algorithm \ref{alg:FD}) to identify these unreliable frequency points (i.e., when $\mathcal{W}_i(k)=0$) for parameter estimation.
    \begin{enumerate}[label=(\Alph*)]
    \item \textbf{Coherence}: We compute the coherence factor $C_i^{RV}(k)$ between the local excitation $\mathbf{r}_i$ and small-signal voltage $\Delta \mathbf{v}_i$, describing the linear dependence between them. If it is found to be lower than a given $\epsilon$, the voltage response at that frequency is predominantly driven by the grid rather than the local VSC, and that data point is discarded.
    \item \textbf{Band-pass filtering}: To prevent estimation bias from known physical operating conditions, we discard frequencies outside a bounded range $[\omega^a,\omega^b]$. Frequencies below $\omega^a$ are rejected as they are heavily dominated by the fundamental grid frequency (DC value in the $dq$ and complex coordinates) and active closed-loop tracking dynamics (e.g., droop control, PLLs, etc.). Similarly, frequencies above $\omega^b$ are rejected to avoid spectral aliasing and high-frequency harmonics from the inverters.
    \item \textbf{Passivity awareness}: Note that $\mathbb{R}\{ \Tilde{\mathbf{Y}}_i(\mathbf{j}\omega)\}>0$ at all frequencies since $\Tilde{\gamma}_i>0,\rho\geq0$. Thus, if $\mathbb{R}\{h_i'(\mathbf{j}\omega)\}<0$ at any frequency $\omega$, it is certain that the noise ($\Tilde{h}_i$ in this case) completely overpowers the signal. We discard these frequencies for $\theta_i$ estimation.
\end{enumerate}
\begin{algorithm}
\caption{$\textrm{Frequency Discrimination}$} \label{alg:FD}
    \begin{algorithmic}
        \STATE $\mathcal{W}_i \leftarrow 1_{N_s}$ \textrm{Frequency weight}
        \FOR{$k=0$ \textit{to} $N_s$}
            \STATE $C_i^{RV}(k) = |S_i^{RV}(\mathbf{j}\omega_k)|^2/(S_i^{RR}(\mathbf{j}\omega_k)S_i^{VV}(\mathbf{j}\omega_k))$
            \IF[\textbf{(A) Coherence}]{$C_i^{RV}(k)<\epsilon$}  
                \STATE $\mathcal{W}_i(k) \leftarrow 0$
            \ENDIF
            \IF[\textbf{(B) Band-pass filtering}]{$\omega_k<\omega^a$ OR $\omega_k>\omega^b$}
            \STATE $\mathcal{W}_i(k) \leftarrow 0$
            \ENDIF
            \IF[\textbf{(C) Passivity awareness}]{$\mathbb{R}\{h_i'(k)\}<0$}
                \STATE $\mathcal{W}_i(k) \leftarrow 0$
            \ENDIF
        \ENDFOR
    \end{algorithmic}
\end{algorithm}

\subsection{Small-signal Equivalent Voltage Estimation Algorithm}\label{disturbanceest}
Upon estimating $\theta_i$ using \eqref{theta_opt}, we now implement a Kalman filter algorithm \cite{KFtheory} to estimate $d_i(k) \ \forall \ k\in \{0,1,\ldots, N_s\}$. The main idea is to treat this setup as an "unknown input observer" \cite{UIO}. To this end, we define the process model as a random walk
\begin{equation}\label{randomwalk}
    d_i(k+1) = d_i(k) + q_i(k), \ q_i(k) \sim \mathcal{N}(0_2,Q), 
\end{equation}
where, $d_i(k) := d_i(\omega_k)$ and process noise covariance $Q = \sigma_qI_2$.

The measurement model is given by $\Tilde{z}_i(k) = [\hat{\theta}_i]_2d_i(k) + \nu_i^d(k)$, with $\Tilde{z}_i(k) = {z}_i(k) - H_i^\theta(k)\hat{\theta}_i$ and the measurement noise $\nu_i^d(k) \sim \mathcal{N}(0_2,R_d)$, with the measurement noise covariance $R_d = c_1I_2$. We did not include process noise for the estimation of the parameters $\theta_i$ as they are fixed for linear time-invariant systems. On the other hand, $d_i$ varies across frequency; thus, it is necessary to include a process noise.
\begin{remark}[Process model]\label{KFhighgain}
    In reality, there exists no deterministic model to describe the evolution of $d_i$ along frequency data points. This intuitively prompts the process model to be $d_i(k) = q_i(k)$ rather than the random walk model in  \eqref{randomwalk}. However, given our hyper-parameter tuning, this leads to a high-gain observer, effectively resulting in a naive estimate $\hat{d}_i(k) \approx (H_i^d(k,\hat{\theta}_i))^{-1}\Tilde{z}_i(k)$, which is heavily sensitive to small deviations in $\hat{\theta}_i$ from its true value $\theta_i$. Moreover, when there is no recursion in the process model, the algorithm does not learn from the errors of previous frequency data points.
\end{remark}
Similar to \eqref{theta_opt}, we have the following optimization problem to solve for $d_i$:

\footnotesize
\begin{equation}\label{d_opt}
    \hat{d}_i(k) = \underset{d_i(k)}{\arg}\min  \Bigg\lVert \Tilde{z}_i(k) - [\hat{\theta}_i]_2d_i(k)\Bigg\rVert_{R_d(k)^{-1}}
    ^2 + \Bigg\lVert d_i(k) - \hat{d}_i(k-1)\Bigg\rVert_{Q^{-1}}^2,
\end{equation}
\normalsize  leading to the following posterior update equations:
\begin{align}
    \hat{d}_i(k+1) &= \hat{d}_i(k) + K_d(k)(\Tilde{z}_i(k) - [\hat{\theta}_i]_2\hat{d}_i(k)), \label{dupdate}\\
    P_d(k+1) &= (I_{2} - [\hat{\theta}_i]_2K_d(k))(P_d(k)+Q(k)), \label{dPupdate}
\end{align}
with the Kalman filter gain at frequency iteration $k$ given by 
\begin{equation}\label{KFgain}
    K_d(k) = [\hat{\theta}_i]_2(P_d(k)+Q)(R_d + [\hat{\theta}_i]_2^2(P_d(k)+Q))^{-1}.
\end{equation}

Once $d_i$ is estimated, it is then possible to ascertain the magnitude and phase of the small-signal equivalent voltage as $\Delta \hat{\Tilde{\mathbf{v}}}_i(\mathbf{j}\omega) = \Delta \mathbf{v}_i(\mathbf{j}\omega)([\hat{d}_i(\omega)]_1 + \mathbf{j}[\hat{d}_i(\omega)]_2)$.

\subsection{Theoretical Guarantees}
We now provide analytical guarantees on the error bounds associated with the equivalent admittance and voltage, and their inferences. Before providing the equivalent admittance and voltage error bounds, we provide the following remark on the number of solutions.

\begin{remark}[Number of solutions]\label{numberofsolutions}
Consider an error $\epsilon_i^\mathbf{Y}(\mathbf{j}\omega)$ on the estimated equivalent admittance such that $\hat{\Tilde{\mathbf{Y}}}_i(\mathbf{j}\omega) = {\Tilde{\mathbf{Y}}}_i(\mathbf{j}\omega) + \epsilon_i^\mathbf{Y}(\mathbf{j}\omega)$. This error gets algebraically absorbed in the equivalent voltage estimate, i.e., $\Delta \hat{\Tilde{\mathbf{v}}}_i(\mathbf{j}\omega) = \frac{\epsilon_i^\mathbf{Y}(\mathbf{j}\omega)}{\hat{\Tilde{\mathbf{Y}}}_i(\mathbf{j}\omega)}\Delta\mathbf{v}_i(\mathbf{j}\omega) + \frac{\Tilde{\mathbf{Y}}_i(\mathbf{j}\omega)}{\hat{\Tilde{\mathbf{Y}}}_i(\mathbf{j}\omega)}\Delta \Tilde{\mathbf{v}}_i(\mathbf{j}\omega)$ ensuring the overall equivalent model (i.e., $\Delta\mathbf{i}_i(\mathbf{j}\omega)/\Delta\mathbf{v}_i(\mathbf{j}\omega)$) remains unchanged. Consequently, there exists an infinite number of mathematical solutions for the split between equivalent admittance and voltage.

This flexibility often prompts approximations that appear convenient, such as assuming an infinite bus (i.e., $\Delta \Tilde{\mathbf{v}}_i = 0$) to lump all network complexities into the admittance (like in \cite{mimoidentificationverena,ultrafast}), or conversely, fixing a standard admittance $\Tilde{\mathbf{Y}}^0$ to shift all unmodeled dynamics into the voltage. Although the overall input-output model remains identical, these approaches are not recommended because: (a) the estimated variables lose their physical meaning, preventing the accurate extraction of critical internal parameters (e.g., inertia and damping constants) from the active components; and (b) forcing the equivalent admittance to absorb the active control dynamics of other VSCs negates the simple first-order structure that we have in our approach. Accounting for these active components within a passive admittance framework results in a more complex, asymmetric (in $dq$ coordinates) high-order transfer function, which complicates the parameter identification process and makes the algorithm highly susceptible to numerical ill-conditioning.
\end{remark}
We now present the formal results on the upper-bound on the parameter estimation error.
\begin{theorem}[Equivalent admittance estimation error]\label{theorem1}
Assume that there exists closed-loop transfer functions $G_{i}(s),\Tilde{G}_i(s), \ \forall \ i\in[N]$ in the Thevenin equivalent in Fig. \ref{fig:thevenin} with the individual excitations $\mathbf{r}_i$ as inputs and $\Delta \mathbf{v}_i,\Delta \Tilde{\mathbf{v}}_i$ as outputs, respectively, such that 
\begin{align}
    \Delta \mathbf{v}_i(\mathbf{j}\omega) &= G_i(\mathbf{j}\omega)\mathbf{r}_i(\mathbf{j}\omega) + \sum_{j=1,j\neq i}^{N} G_j(\mathbf{j}\omega)\mathbf{r}_j(\mathbf{j}\omega) + n_i(\mathbf{j}\omega), \label{closedloop1}\\
    \Delta\Tilde{\mathbf{v}}_i(\mathbf{j}\omega) &= \Tilde{G}_i(\mathbf{j}\omega)\mathbf{r}_i(\mathbf{j}\omega) + \sum_{j=1,j\neq i}^{N} \Tilde{G}_j(\mathbf{j}\omega)\mathbf{r}_j(\mathbf{j}\omega) + \Tilde{n}_i(\mathbf{j}\omega) \label{closedloop2},
\end{align}
where, $n_i(\mathbf{j}\omega),\Tilde{n}_i(\mathbf{j}\omega)$ are uncorrelated harmonic noise. Further, let Assumptions \ref{RLassumption},\ref{excitationsassumption} hold true. Then, the parameter estimation method in Section \ref{parameterest} renders the following upper-bound on the parameter estimation error $\epsilon_i^\theta := \hat{\theta}_i - \theta_i$:
\begin{equation*}
    \lVert \epsilon_i^\theta \rVert < \Tilde{\gamma}_i \mathcal{S}_i \sqrt{\frac{1 + \mathrm{Mean}\{ |h_i'(\mathbf{j}\omega_k)|^2\}}{\mathrm{Var}\{ \mathbb{R}\{ h_i'(\mathbf{j}\omega_k) \}\} + \mathrm{Mean}\{\mathbb{I}\{ h_i'(\mathbf{j}\omega_k) \}^2 \}}},
\end{equation*}
where, $\mathcal{S}_i = \sqrt{\mathrm{Mean}\{\vert\tilde{G}_i(\mathbf{j}\omega_k)/G_i(\mathbf{j}\omega_k)\vert^2\}}$.
\end{theorem}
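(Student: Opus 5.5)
The idea is to rewrite the IV-preprocessed equation \eqref{thevenin_IV} as a linear regression in $\theta_i$ with an additive residual. We then bound the least-squares error as residual energy divided by the excitation of the regressor.

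\textbf{Step 1: the residual.} Divide \eqref{thevenin_IV} by $S_i^{RV}$. This gives
\[
h_i'(\mathbf{j}\omega) = \Tilde{\mathbf{Y}}_i(\mathbf{j}\omega)\bigl(1-\Tilde{h}_i'(\mathbf{j}\omega)\bigr), \qquad \Tilde{h}_i' := S_i^{R\Tilde V}/S_i^{RV}.
\]
Substitute \eqref{closedloop1}--\eqref{closedloop2} and use Assumption \ref{excitationsassumption}, together with the fact that $n_i,\Tilde n_i$ are uncorrelated with $\mathbf{r}_i$. Only the $\mathbf{r}_i$-terms survive, so $S_i^{RV}=G_iS_i^{RR}$ and $S_i^{R\Tilde V}=\Tilde G_iS_i^{RR}$. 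Hence $\Tilde{h}_i'(\mathbf{j}\omega)=\Tilde G_i(\mathbf{j}\omega)/G_i(\mathbf{j}\omega)$.

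Repeat the real/imaginary rearrangement that produced \eqref{maineqn}, now with Assumption \ref{RLassumption} fixing the first-order structure \eqref{equivalentmodelsimple}. This yields
\[
z_i'(k)=H_i'(k)\theta_i-\Tilde\gamma_i d_i'(k),
\]
where $d_i'(k)$ collects the real and imaginary parts of $\Tilde G_i/G_i$. In particular $\|d_i'(k)\|=|\Tilde G_i/G_i|$.

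\textbf{Step 2: the estimation error.} With the weights of \eqref{theta_opt} treated as uniform over the retained frequencies (the $c_2\gg c_1$ weighting effectively restricts the sums to points with $\mathcal W_i(k)=1$), the normal equations give
\[
\epsilon_i^\theta = -\Tilde\gamma_i\Bigl(\sum_k H_i'^\top H_i'\Bigr)^{-1}\sum_k H_i'^\top d_i'(k).
\]
Apply Cauchy--Schwarz to $\sum_k H_i'^\top d_i'$. This bounds it by $\sqrt{\sum_k\|H_i'(k)\|_F^2}\,\sqrt{\sum_k\|d_i'(k)\|^2}$, where $\|H_i'\|_F^2 = 1+|h_i'|^2$. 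Dividing by the number of points $K$ turns these into $\sqrt{K(1+\mathrm{Mean}|h_i'|^2)}$ and $\sqrt K\,\mathcal S_i$.

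\textbf{Step 3: the inverse Gram matrix (main obstacle).} We need a lower bound on $\lambda_{\min}\bigl(\sum_k H_i'^\top H_i'\bigr)$. Writing $a=\mathbb R\{h_i'\}$ and $b=\mathbb I\{h_i'\}$,
\[
\frac1K\sum_k H_i'^\top H_i'=\begin{bmatrix}\mathrm{Mean}(a^2+b^2) & -\mathrm{Mean}(a)\\ -\mathrm{Mean}(a) & 1\end{bmatrix}.
\]
Its determinant is $\mathrm{Var}(a)+\mathrm{Mean}(b^2)$, by completing the square, and its trace is $1+\mathrm{Mean}|h_i'|^2$. Bounding $\|M^{-1}\|\le \mathrm{tr}(M)/\det(M)$ for a $2\times 2$ positive definite $M$ gives a denominator of $\mathrm{Var}(a)+\mathrm{Mean}(b^2)$.

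Combining this with the numerator from Step 2 gives a bound in which $1+\mathrm{Mean}|h_i'|^2$ appears to power $3/2$, which is not the claimed form. To get the stated square root I would first try an alternative. Instead of the $\|H\|\,\|d\|$ Cauchy--Schwarz, bound the error via $\|(H^\top H)^{-1}H^\top\|=\lambda_{\min}(H^\top H)^{-1/2}$ on the stacked system, i.e. the pseudo-inverse norm. Then use $\lambda_{\min}\ge\det/\mathrm{tr}$. This yields exactly
\[
\|\epsilon_i^\theta\|\le \Tilde\gamma_i\mathcal S_i\sqrt{\mathrm{tr}/\det},
\]
which is the claimed expression. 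Strict inequality follows because $\lambda_{\min}>\det/\mathrm{tr}$ whenever both eigenvalues are positive. Invertibility of the Gram matrix requires $\det>0$, i.e. non-constant real part or nonzero imaginary part of $h_i'$ over the retained frequencies. I expect this to be guaranteed by the frequency discrimination step and would state it as the standing persistency-of-excitation condition.
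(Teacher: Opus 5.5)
Your final argument is correct and is essentially the paper's proof. Like the paper, you bound $\lVert\epsilon_i^\theta\rVert\le\lVert(\mathbf{H}_i^\top\mathbf{H}_i)^{-1}\mathbf{H}_i^\top\rVert\,\lVert\eta\rVert=\lambda_{\min}^{-1/2}\lVert\eta\rVert$, use $\lambda_{\min}>\mathrm{det}/\mathrm{tr}$ for the $2\times 2$ Gram matrix, and get $\lVert\eta\rVert=\Tilde{\gamma}_i\sqrt{K}\,\mathcal{S}_i$ from $\Tilde{h}_i'=\Tilde{G}_i/G_i$, so you should drop the Cauchy--Schwarz route of Step 2, which only gives the weaker $(1+\mathrm{Mean}\{|h_i'|^2\})^{3/2}$-over-determinant form (your bookkeeping with $K$ retained points and the symmetric off-diagonal entries $-\mathrm{Mean}(a)$ is cleaner than the paper's, and the strict inequality additionally requires $\mathcal{S}_i>0$).
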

\begin{proof}
    Refer to Appendix \ref{thm1proof}.
\end{proof}
We infer the following from the above theorem. (a) The parameter estimation error is directly proportional to $\Tilde{\gamma}_i$ (inverse of the equivalent grid inductance). When the transmission lines are highly inductive, $\Tilde{\gamma}_i$ is small. The grid's large inductance acts as a low-pass filter with a smaller cut-off frequency, thus damping higher-frequency noises that would otherwise lead to inaccurate parameter estimates. (b) Note that the denominator of the bound comprises the term $\mathrm{Var}\{ \mathbb{R}\{ h_i'(\mathbf{j}\omega_k) \}\}$. This term corresponds to the persistency of excitation. When we use a wide-band PRBS excitation, the variance is increased, thus leading to better estimates. (c) The term $\mathcal{S}_i$ represents the dynamic coupling (inversely related to grid stiffness). This quantifies the reaction of the rest of the grid to the local excitation. For a VSC connected to a strong grid (i.e., infinite bus), the grid does not react to local perturbations ($\tilde{G}_i \to 0$), driving $\mathcal{S}_i \to 0$ and hence the estimation error to zero. On the other hand, in low-inertia grids, the local excitation actively prompts other VSCs to react, leading to deviations in the parameter values unless an FD criterion like in Algorithm 1 is established. (d) The FD algorithm mathematically minimizes the error bound by constraining $\mathcal{S}_i$. Specifically, rejecting frequency points with low coherence prevents $G_i(\mathbf{j}\omega)$ from approaching zero; rejecting lower frequencies excludes the bandwidth where droop controllers actively operate (where $\tilde{G}_i(\mathbf{j}\omega)$ is exceptionally large); and implementing a passivity removing non-physical anomalies where $\vert{}\tilde{G}_i/G_i\vert{} \gg 1$. Together, these steps minimize $\mathcal{S}_i$.

We now present the formal results on the upper-bound on the equivalent voltage estimation error at each frequency $\omega_k$:
\begin{theorem}[Equivalent voltage estimation error]\label{theorem2}
Let Assumption \ref{RLassumption} hold true. With the parameter estimation error $\epsilon_i^\theta$ bounded by Theorem \ref{theorem1}, let $\epsilon_i^d(\omega_k) := \hat{d}_i(\omega_k) - d_i(\omega_k)$ be the error between the Kalman filter estimate and the true value of $d_i$ at frequency $\omega_k$.
The expected squared norm of the state estimation error is strictly bounded by the recursive relation:
\begin{align*}
    \mathbb{E}[\Vert{}\epsilon_i^d(\omega_k)\Vert{}^2] \le & \varphi^2 \Big( \mathbb{E}[\Vert{}\epsilon_i^d(\omega_{k-1})\Vert{}^2]
    +\mathbb{E}[\Vert{}\Delta d_i(\omega_k)\Vert{}^2] \Big) + \nonumber \\
     &\frac{2\lVert \epsilon_i^\theta\rVert^2}{\hat{\Tilde{\gamma}}_i^2}(1 + |h_i(\omega_k)|^2 + \mathbb{E} \Big[\lVert d_i(\omega_k) \big\Vert{}^2 \Big]),
\end{align*}
where, $\varphi \in (0,1)$ is a contraction constant and $\Delta d_i(\omega_k) := d_i(\omega_k) - d_i(\omega_{k-1})$ denotes the difference of the true value of $d$ between consecutive frequency data points. Since the estimation error for $d$ contracts along frequency points, so does the equivalent voltage estimation error given by $\mathbb{E}[\Vert{}e_i^v(\mathbf{j}\omega_k)\Vert{}^2] = \vert{}\Delta \mathbf{v}_i(\mathbf{j}\omega_k)\vert{}^2 \mathbb{E}[\Vert{}e_i^d(\omega_k)\Vert{}^2]$, where $e_i^v(\mathbf{j}\omega_k) := \Delta\hat{\tilde{\mathbf{v}}}_i(\mathbf{j}\omega_k) - \Delta\tilde{\mathbf{v}}_i(\mathbf{j}\omega_k)$.
\end{theorem}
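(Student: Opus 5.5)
We write the Kalman filter error dynamics directly in terms of the innovation and then split the innovation into a part driven by the current estimation error and a part driven by the parameter error $\epsilon_i^\theta$.

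\textbf{Step 1 (error recursion).} Using \eqref{maineqn} with the true $\theta_i$, the residual satisfies
\[
\Tilde{z}_i(k) = z_i(k) - H_i(k)\hat{\theta}_i = [\theta_i]_2 d_i(k) - H_i(k)\epsilon_i^\theta .
\]
Since $[\theta_i]_2 = [\hat{\theta}_i]_2 - [\epsilon_i^\theta]_2$, this can be rewritten as
\[
\Tilde{z}_i(k) = \hat{\Tilde{\gamma}}_i d_i(k) + w_i(k), \qquad
w_i(k) := -H_i(k)\epsilon_i^\theta - [\epsilon_i^\theta]_2 d_i(k).
\]
Substituting into \eqref{dupdate} and subtracting $d_i(k)$ gives, with the scalar gain $K_d(k)$,
\[
\epsilon_i^d(k) = \big(I_2 - \hat{\Tilde{\gamma}}_i K_d(k)\big)\big(\epsilon_i^d(k-1) - \Delta d_i(k)\big) + K_d(k)\, w_i(k).
\]
Here we align the indexing of \eqref{dupdate} so that the prior at step $k$ is $\hat{d}_i(k-1)$.

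\textbf{Step 2 (contraction).} From \eqref{KFgain},
\[
1 - \hat{\Tilde{\gamma}}_i K_d = \frac{R_d}{R_d + \hat{\Tilde{\gamma}}_i^2 (P_d + Q)} .
\]
Because $Q = \sigma_q I_2 \succ 0$, this quantity is bounded uniformly in $k$ by
\[
\varphi := \frac{c_1}{c_1 + \hat{\Tilde{\gamma}}_i^2 \sigma_q} \in (0,1).
\]
The covariance recursion \eqref{dPupdate} keeps $P_d \succeq 0$, so the bound holds at every step. In the same way, $K_d \hat{\Tilde{\gamma}}_i \le 1$ gives $|K_d| \le 1/\hat{\Tilde{\gamma}}_i$.

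\textbf{Step 3 (squaring and expectation).} Take norms and expectations. The cross term between $\epsilon_i^d(k-1)$ and $\Delta d_i(k)$ is the delicate point, because the claimed bound is additive, $\varphi^2(\mathbb{E}\|\epsilon^d\|^2 + \mathbb{E}\|\Delta d\|^2)$. I would argue that, under the random-walk model \eqref{randomwalk}, the increment $\Delta d_i(k) = q_i(k-1)$ is zero-mean and independent of the past estimation error, so this cross term vanishes in expectation. The cross term between the contracted part and $K_d w_i$ would be absorbed by the same independence argument, or else by splitting with Young's inequality and adjusting $\varphi$.

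\textbf{Step 4 (bounding the parameter-error term).} This gives
\[
\mathbb{E}\|K_d w_i\|^2 \le \frac{1}{\hat{\Tilde{\gamma}}_i^2}\, \mathbb{E}\|w_i\|^2 .
\]
Apply $\|a+b\|^2 \le 2\|a\|^2 + 2\|b\|^2$ together with $\|H_i(k)\epsilon_i^\theta\|^2 \le \|H_i(k)\|_F^2 \|\epsilon_i^\theta\|^2$. Here $\|H_i(k)\|_F^2 = 1 + |h_i(k)|^2$, and $|[\epsilon_i^\theta]_2| \le \|\epsilon_i^\theta\|$. This yields
\[
\frac{2\|\epsilon_i^\theta\|^2}{\hat{\Tilde{\gamma}}_i^2}\big(1 + |h_i|^2 + \mathbb{E}\|d_i\|^2\big),
\]
which is exactly the stated term.

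\textbf{Step 5 (voltage error).} Since $\Delta\hat{\Tilde{\mathbf{v}}}_i - \Delta\Tilde{\mathbf{v}}_i = \Delta\mathbf{v}_i\big([\epsilon_i^d]_1 + \mathbf{j}[\epsilon_i^d]_2\big)$ and $\Delta\mathbf{v}_i$ is measured (deterministic given data), the identity $\mathbb{E}\|e_i^v\|^2 = |\Delta\mathbf{v}_i|^2\,\mathbb{E}\|\epsilon_i^d\|^2$ is immediate.

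The main obstacle is the cross terms in Step 3. They must be removed either by an independence or orthogonality argument from the Kalman innovation structure, or by a Young-type splitting that only perturbs the constants $\varphi$ and $2$.
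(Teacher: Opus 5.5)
Your Steps 1, 2, 4 and 5 follow the paper's argument. They use the same recursion $\epsilon_i^d(k)=\Phi(k)\big(\epsilon_i^d(k-1)-\Delta d_i(k)\big)+K_d(k)w_i(k)$ with $\Phi(k)=I_2-K_d(k)\hat{\tilde{\gamma}}_i$, the same Woodbury form of $\Phi(k)$, and the same trace bound on $H_i(k)$. Your bound $\lVert K_d\rVert\le 1/\hat{\tilde{\gamma}}_i$ is the correct version of the paper's $1/[\hat{\theta}_i]_2^2$.

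The paper gets through your Step 3 only by asserting that all cross terms have zero expectation. That is not true in general, so the gap you flagged is genuine, and the paper shares it. Independence does handle the pair $\epsilon_i^d(k-1)$, $\Delta d_i(k)=q_i(k-1)$. It does not handle the cross terms with $K_d(k)w_i(k)$, for two reasons:
\begin{enumerate}
\item The deterministic part $-H_i(k)\epsilon_i^\theta$ pairs with $\mathbb{E}[\epsilon_i^d(k-1)]\neq 0$. The same fixed $\epsilon_i^\theta$ has biased every earlier update, so no innovation orthogonality is available for this misspecified filter.
\item Because $d_i(k)=d_i(k-1)+\Delta d_i(k)$, the pair $-\Phi\Delta d_i(k)$, $-K_d[\epsilon_i^\theta]_2 d_i(k)$ contributes $2[\epsilon_i^\theta]_2\,\phi\kappa\,\mathbb{E}\lVert\Delta d_i(k)\rVert^2$ when $\Phi=\phi I_2$, $K_d=\kappa I_2$.
\end{enumerate}
Worse, if $\varphi$ is taken tight (for example your Step 2 constant at a step with $P_d=0$), the inequality itself fails. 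Take this deterministic instance:
\begin{itemize}
\item $c_1=\sigma_q$ and $\hat{\tilde{\gamma}}_i=1$, so $\phi=\kappa=\tfrac12$;
\item $\Delta d_i(k)=0$, $d_i(k)=(1,0)^\top$, and $h_i(k)=0$, which is consistent with $h_i=\tilde{\mathbf{Y}}_i(1-\tilde{h}_i)$;
\item $\epsilon_i^\theta=(0,e)^\top$.
\end{itemize}
Then $w_i=(-2e,0)^\top$, and the last term of the bound equals $\lVert w_i\rVert^2$. Choosing $\epsilon_i^d(k-1)=10w_i$ gives $\lVert\epsilon_i^d(k)\rVert^2=30.25\lVert w_i\rVert^2$, against a right-hand side of $26\lVert w_i\rVert^2$. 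So $\varphi$ must be enlarged. A generic Young split that also inflates the factor $2$ would prove a different statement.

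Your Young fallback points the right way, but it reproduces the stated constants only if you keep the exact identity $\hat{\tilde{\gamma}}_iK_d(k)=I_2-\Phi(k)$ instead of bounding $\Phi$ and $K_d$ separately. Set $u:=\epsilon_i^d(k-1)-\Delta d_i(k)$ and $y:=w_i(k)/\hat{\tilde{\gamma}}_i$. Then $\epsilon_i^d(k)=\Phi(k)u+(I_2-\Phi(k))y$. Since $\Phi(k)=c_1\big(c_1I_2+\hat{\tilde{\gamma}}_i^2(P_d(k)+Q)\big)^{-1}$ is symmetric with spectrum in $(0,1)$, this is a coordinatewise convex combination in its eigenbasis. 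Convexity of $t\mapsto t^2$ then gives, for every realization,
\[
\lVert\epsilon_i^d(k)\rVert^2\le\lVert\Phi(k)\rVert\,\lVert u\rVert^2+\frac{\lVert w_i(k)\rVert^2}{\hat{\tilde{\gamma}}_i^2}.
\]
For $\Phi=\phi I_2$ this is exactly Young's inequality with weight $\delta=(1-\phi)/\phi$. The only remaining cross term sits inside $\mathbb{E}\lVert u\rVert^2$, and your random-walk independence removes it. Your Step 4 then finishes the proof with the factor $2$ intact and
\[
\varphi:=\big(\sup_k\lVert\Phi(k)\rVert\big)^{1/2}\le\big(c_1/(c_1+\hat{\tilde{\gamma}}_i^2\sigma_q)\big)^{1/2}<1 .
\]
This is the square root of your Step 2 constant, and it still satisfies $\lVert\Phi(k)\rVert\le\varphi$.
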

\begin{proof}
    Refer to Appendix \ref{thm2proof}.
\end{proof}
We infer the following from the above theorem: (a) Similar to a time-domain Kalman filter, the estimation error asymptotically decays across frequency iterations. Because the low-frequency band contains the crucial active components (e.g., inertia and damping), we execute the filter backward across the spectrum (from $k=N_s$ to $k=0$). This ensures the estimation error has maximally converged precisely where accuracy is most critical. (b) The equivalent voltage estimation error is directly proportional to the squared parameter estimation error $\Vert{}\epsilon_i^\theta\Vert{}^2$ due to the explicit reliance on $\hat{\theta}_i$. Note that this error injection is amplified by a factor of $1/\hat{\gamma}_i^2$. Since $\hat{\gamma}_i$ is proportional to the inverse of the equivalent grid inductance, a highly inductive grid yields a correspondingly small $\hat{\gamma}_i$. Therefore, in highly inductive networks, the equivalent voltage estimation becomes very sensitive to parameter inaccuracies. This shows an interesting trade-off: while a large grid inductance acts as a natural low-pass filter that improves equivalent admittance estimation (as established in Theorem \ref{theorem1}), it simultaneously amplifies the propagation of any remaining parameter estimation errors into the equivalent voltage estimate.

\section{Numerical Case Studies}\label{resultssection}
In this section, we present the numerical values of the problem setup. Subsequently, we present the choice of hyperparameters, and then the simulation results of parameter and equivalent voltage estimation, and the main inferences. 

\subsection{Experimental Setup}
We consider an $N=5$ converter system interconnected with each other via a resistive-inductive line, all operating in GFM mode with the control structure as shown in Fig. \ref{fig:GFM}. The parameter values considered are shown in Table \ref{gridparameters}. The resistance and inductance values in p.u. between VSC $i,j$ are given by $r_{ij} = R_{ij}/Z_b,l_{ij} = L_{ij}/L_b$ ($R_{ij},L_{ij}$ being the resistance and inductance in S.I. values and the base values $i_b = S_b/\sqrt{3}v_b$, $Z_b = v_b/i_b$, $L_b = Z_b/\omega_b$) with their ranges as shown in Table \ref{gridparameters}, with resistance-to-inductance ratio $\rho_{ij} = r_{ij}/l_{ij}<<1$ \cite{decentralizedstability}. In addition, it is ensured that no two edges share the same $\rho_{ij}$, i.e., the lines are heterogeneous. 

We carry out $20$ Monte-Carlo experiments with our $5-$ converter system. Across the experiments, although the line and filter parameters remain constant, we consider different cut-off frequencies and droop control parameters in the VSCs, along with the excitation sequence varying across the experiments (see Table \ref{gridparameters} for parameter ranges). Pseudo-random binary sequences (PRBS) of magnitude $0.002$ p.u. are injected into the reference signal at each VSC (refer Fig. \ref{fig:GFM}), in line with the specified grid codes for harmonics: IEEE 519 \cite{519gridcode}. The initial phase angle for all converters is set to be zero.
\begin{table}[h]
    \begin{center}
    \caption{Parameter values of the grid \& Converters}\label{gridparameters}
    \begin{tabular}{p{4.3cm}|p{1cm}|p{2.4cm}}
    \hline
        Parameter & Symbol & Numerical value \\
        \hline
        Base power \& frequency & $S_b, \omega_b$ & $1.5\rm kVA, 100\pi \ rad/s$\\
        Base voltage \& current & $v_b, i_b$ & $380 \rm V, 2.28 A$\\
        Base impedance, inductance & $Z_b,L_b$ & $166.7\Omega, 0.53\rm H$\\
        Base capacitance & $C_b$ & $190\mu \rm F$\\
        Line resistance, R-L ratio (p.u.) & $r_{ij},\rho_{ij}$ & $[0.1,0.3],[0.05,0.1]$ \\
        Filter resistance, R-L ratio (p.u.) & $r_i^f,\rho_i^f$ & $[0.03,0.06],[0.2,0.4]$\\
        Filter capacitance (p.u.) & $c_i^f$ & $0.005$\\
        Cut-off frequency & $\omega_i^c$ & $2\pi [5,10] \ \rm rad/s$\\
        Droop control gains & $k_i^\omega, k_i^v$ & $[0.2,0.4]\times 1/S_b$\\
        Power setpoints (p.u.) & $P_i^\star,Q_i^\star$ & $ [0.9,1.1],0$\\
        Magnitude, frequency setpoints (p.u.) & $|v|_i^\star,\omega_i^\star$ & $[0.99,1.01],1$\\
        PRBS excitations (p.u.) & $r_{dq,i}$ & $\{ \pm 0.002 \}^{2 \times 1}$\\
        \hline
    \end{tabular}
    \end{center}
\end{table}
\begin{remark}[Simultaneous excitations]\label{excitationamplt}
    Note that for general SysID algorithms in the single converter case with an infinite bus, the larger the excitation magnitude (in this case, the PRBS $r_{dq,i}$), the better the estimation, for example, a magnitude of $0.1$ p.u. in \cite{mimoidentificationverena}. However, in our multi-agent setup, the grid voltage is not a rigid infinite bus. Thus, an increase in the excitation amplitude of one VSC, which would aid in the estimation of its equivalent model, would be detrimental to the estimation of other VSCs' equivalent impedance. Hence, it is vital that all VSCs strictly adhere to the grid code limitations on excitation amplitude.
\end{remark}
We then measure the current and voltage at PCC $i$ at a sampling rate $f_s = 10\rm kHz$, with the total simulation time being $N_{total} = 55\rm s$. 

\subsection{Hyperparameter Selection}
In the parameter estimation algorithm in Section \ref{parameterest}, we set the hyperparameters $c_1 = 0.1, c_2 = 10^{20}$, thus almost disregarding data points where the grid voltage corrupts data for parameter estimation. In the pseudo-code for frequency discrimination in Algorithm \ref{alg:FD}, we set $\epsilon=0.1$ to disregard frequency points predominantly dominated by the grid, the band-pass cut-off frequencies $\omega^a = 100 \rm \ rad/s$ and $\omega^b = 600 \rm \ rad/s$ to remove the biased data owing to the nominal grid operation and droop control in the lower frequencies, and harmonic noise in the higher frequencies, respectively.

\subsection{Simulation Results}\label{simulationresultssec}
In Fig. \ref{fig:ETFE}, we show the raw ETFE along with the true equivalent admittance $\mathbf{\Tilde{Y}}_i(s)$ from the perspective of VSC 1. Note that we use the formula in \eqref{equivalentadmittance_full} to compute the true equivalent admittance and not the simplified version as in \eqref{equivalentmodelsimple}. At low frequencies\footnote{We have employed our algorithm in the complex domain after converting to the $dq$ coordinates. Thus, low frequencies in the complex coordinates correspond to regions in the vicinity of the nominal operating frequency (i.e., $\omega_b$) in the $abc$ coordinates.}, the significant bias in the raw ETFE is due to the VSCs operating in the vicinity of the nominal frequency $\omega_b$ and the droop-control response of all VSCs to the PRBS excitation. We also note that there exists wide-band spectral noise at all frequencies owing to the simultaneous excitation of all VSCs, thus severely corrupting the data at hand. From Fig. \ref{fig:ETFE}, we see that in order to elegantly separate the estimation of $\mathbf{\Tilde{Y}}_i$ from the contribution of the active parts $\Delta \Tilde{\mathbf{v}}_i$, it is necessary to employ the frequency-domain pre-processing techniques described in Section \ref{parameterest}.

\begin{figure}[h]
    \centering
    \includegraphics[scale=0.38]{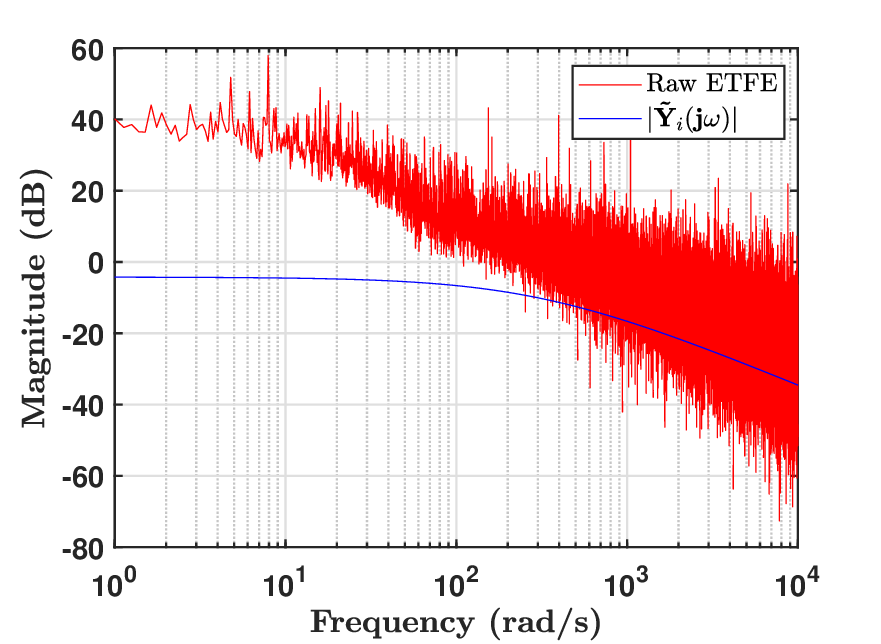}
    \caption{A representative plot of the true equivalent admittance as seen from VSC 1 (blue) along with its ETFE counterpart (i.e. $|h_1(\mathbf{j}\omega)| = |\Delta\mathbf{i}_1(\mathbf{j}\omega)|/|\Delta\mathbf{v}_1(\mathbf{j}\omega)|$)}
    \label{fig:ETFE}
\end{figure}
\begin{figure}[h]
    \centering
    \includegraphics[scale=0.29]{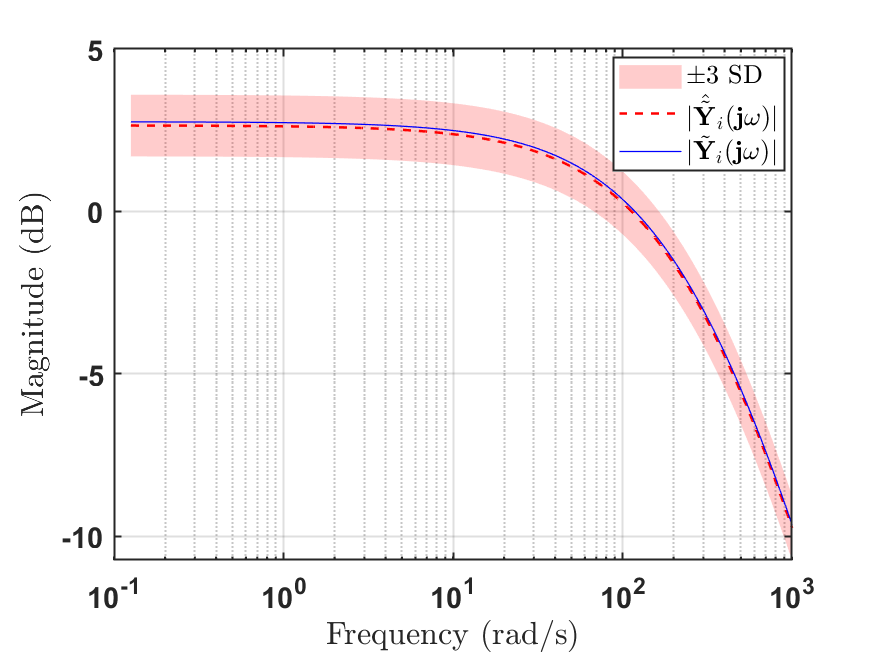}
    \includegraphics[scale=0.32]{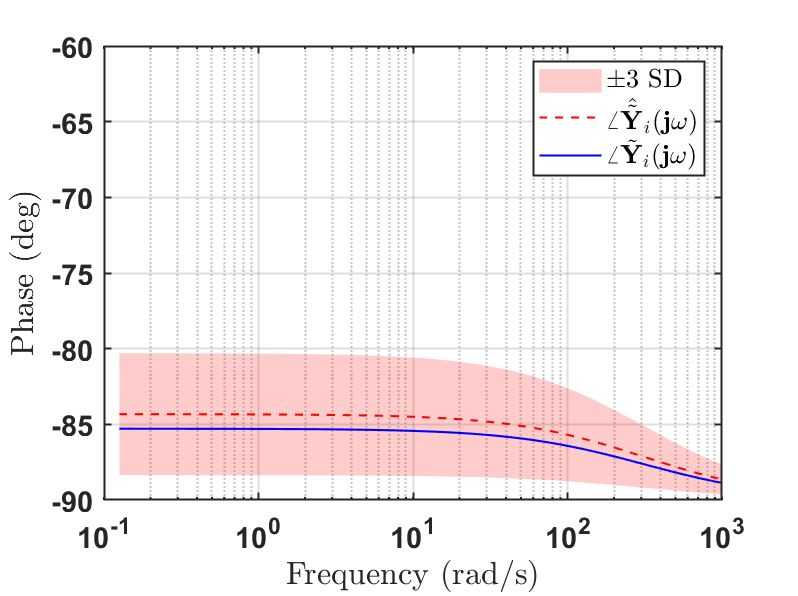}
    \caption{Representative plots of the estimated (in hyphenated red) (a) (left) Magnitude ($|\mathbf{\Tilde{Y}}_1(\mathbf{j}\omega)|$) and (b) (right) Phase ($\angle \mathbf{\Tilde{Y}}_1(\mathbf{j}\omega)$) along with the true values (in blue), across $20$ Monte-Carlo experiments. Note that the phase angle is non-zero at zero frequency owing to complex poles in $\mathbf{\Tilde{Y}}_1(\mathbf{j}\omega)$. The shaded region in red denotes the $\pm 3$ standard deviation across the experiments.}
\label{fig:parameterestimation_results}
\end{figure}
\begin{figure}[h]
    \centering
    \includegraphics[scale=0.32]{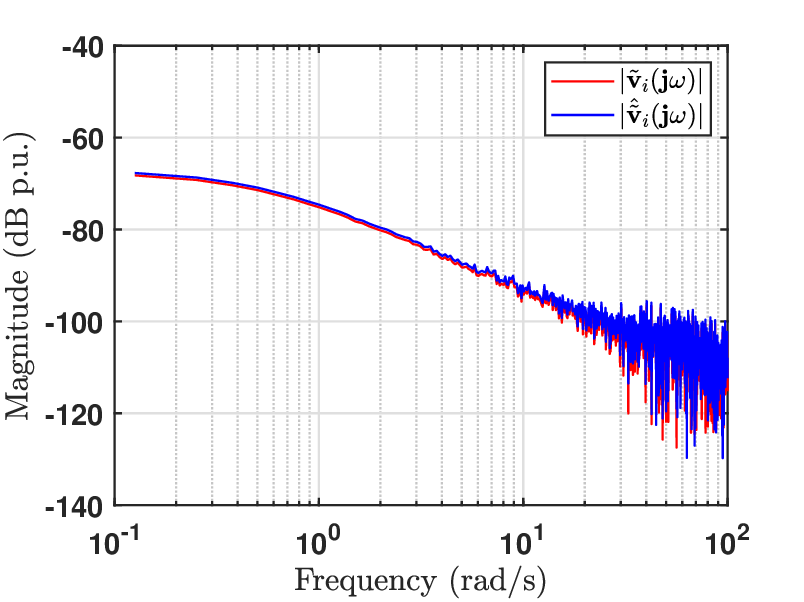}
    \includegraphics[scale=0.3]{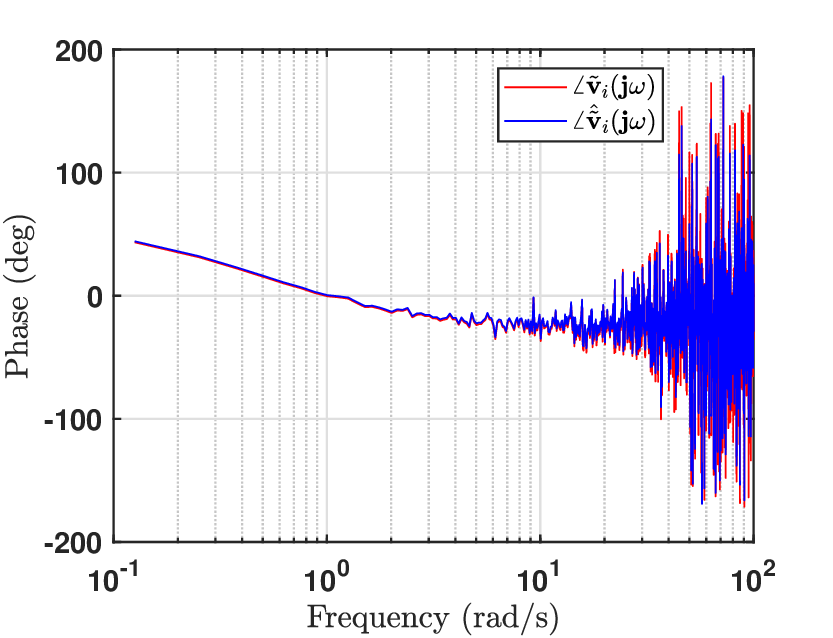}
    \caption{Representative plots of (a) (left) magnitude of the equivalent voltage $| \Tilde{\mathbf{v}}_i(\mathbf{j}\omega)|$ and (b) (right) its phase $\angle \Tilde{\mathbf{v}}_i(\mathbf{j}\omega)$, in the lower frequency bins where the active components are prominent. The red and blue plots denote the true and estimated values, respectively.}
    \label{fig:equivalentvoltage}
\end{figure}
\begin{figure}
    \centering
    \includegraphics[width=0.49\linewidth]{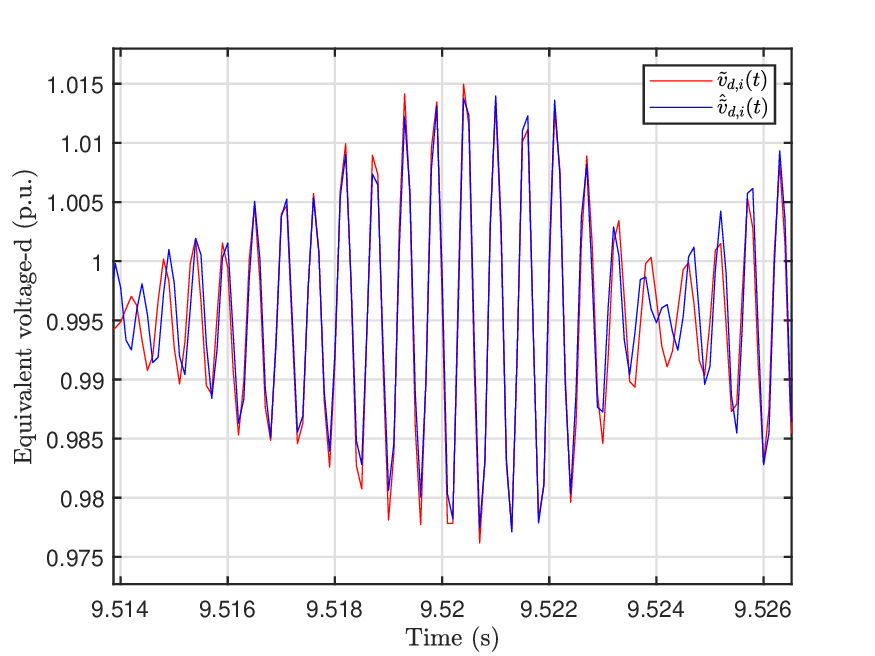}
    \includegraphics[width=0.49\linewidth]{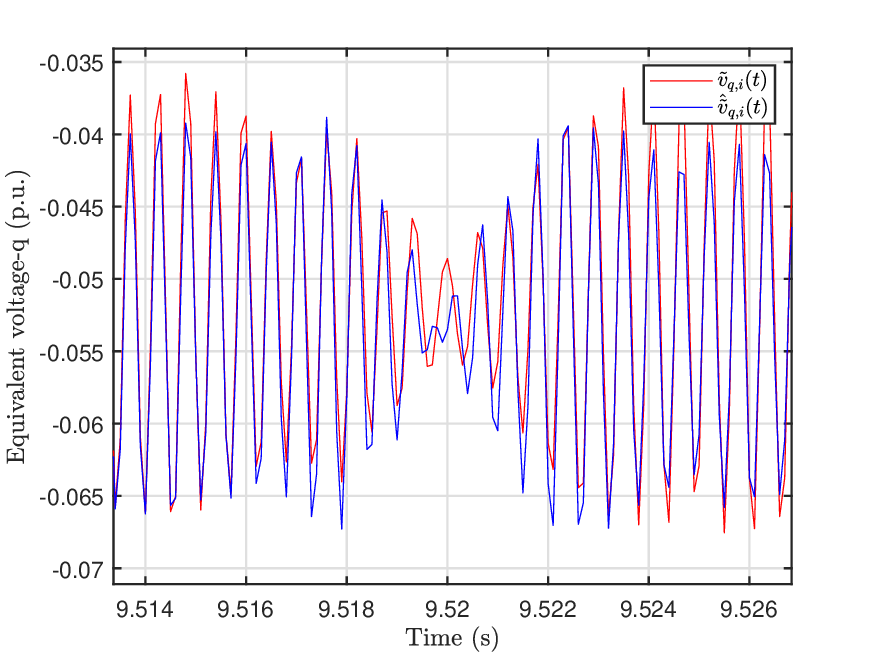}
    \caption{Representative plots of the true equivalent voltage (in red) along with the estimated values (in blue) in the time domain along (a) (left) $d$ and (b) (right) $q$ coordinates.}
    \label{fig:timedomainresult}
\end{figure}
\begin{figure}[h]
    \centering
    \includegraphics[scale=0.33]{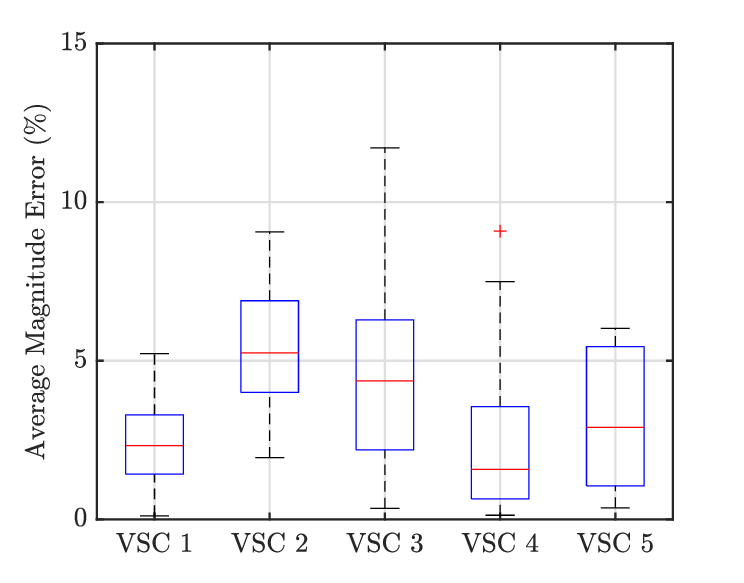}
    \includegraphics[scale=0.38]{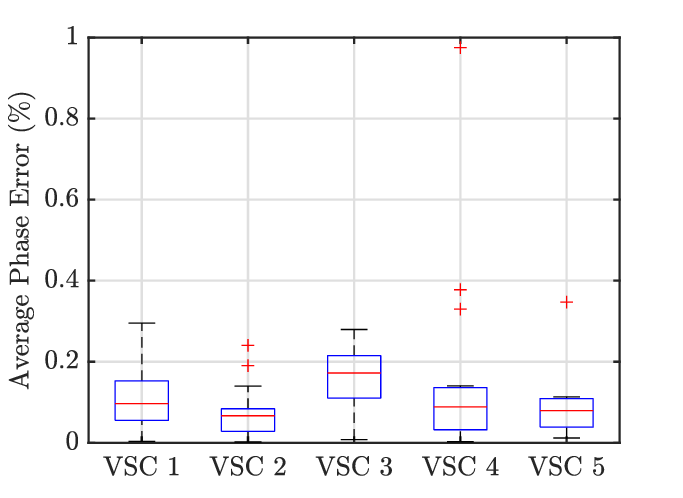}
    \caption{A box plot of (a) (left) the average equivalent admittance magnitude error and (b) (right) the average equivalent admittance phase error as seen from all the $5$ VSCs across $20$ Monte Carlo trials.}
    \label{fig:allVSCresult}
\end{figure}
\begin{figure}[h]
    \centering
    \includegraphics[scale=0.28]{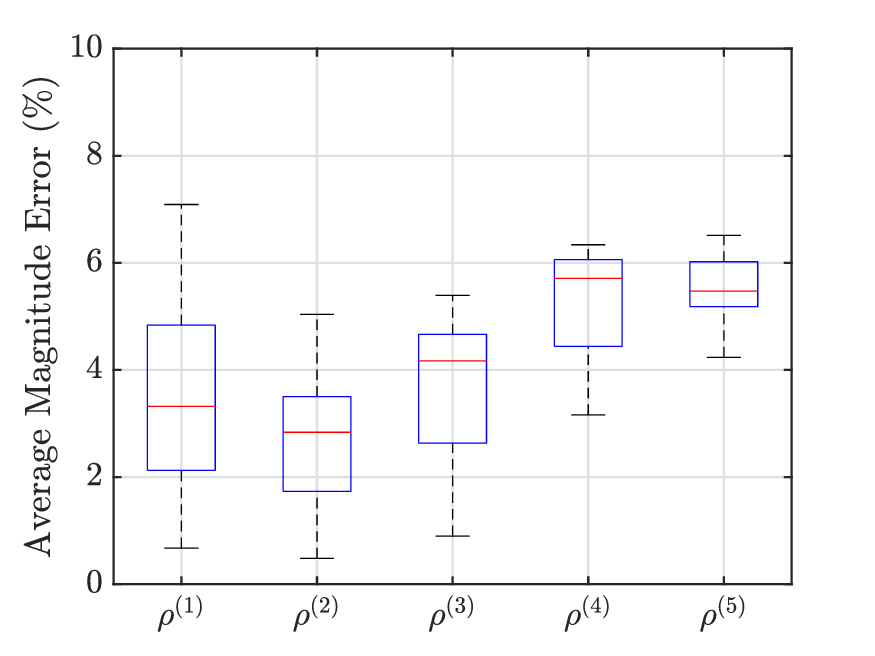}
    \includegraphics[scale=0.28]{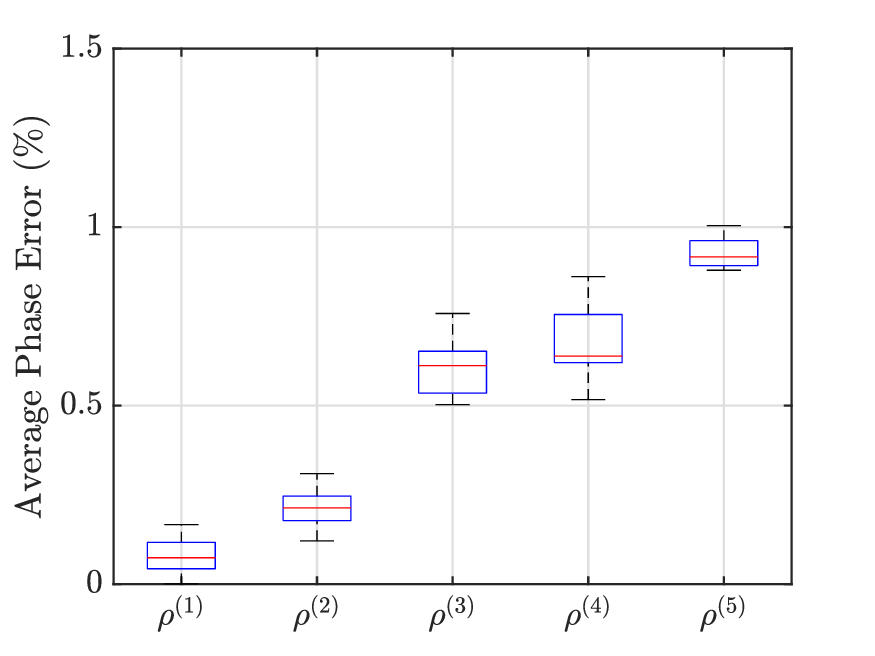}
    \caption{A box plot of (a) (left) the average equivalent admittance magnitude error and (b) (right) the average equivalent admittance phase error w.r.t. an increasing $\rho$ across $20$ Monte Carlo trials. The results shown here are as seen from VSC $1$.}
    \label{fig:increasingrho}
\end{figure}

We now describe the simulation results using our proposed algorithm.
Fig. \ref{fig:parameterestimation_results}(a),(b) show the resulting magnitude and phase plots for the equivalent transfer function estimated from the perspective of VSC 1 ($\Tilde{\mathbf{Y}}_1(s)$), using the algorithm in Section \ref{parameterest}, respectively. We observe an average magnitude and phase error of $0.3 \ \rm dB (3.45\%)$ and $0.1 \ \rm deg (0.11\%)$, with a maximum of $0.56 \ \rm dB (6.67\%)$ and $4.22 \ \rm deg (4.95\%)$, respectively.

Upon estimating the parameters $\hat{\theta}_i$, we then carry out equivalent voltage estimation as explained in Section \ref{disturbanceest}. Note that Section \ref{disturbanceest} describes the estimation for the "small-signal" equivalent voltage $\Delta \Tilde{\mathbf{v}}_i$. We add back the steady-state value and obtain $\Tilde{\mathbf{v}}_i =  \Delta \Tilde{\mathbf{v}}_i + \Tilde{\mathbf{v}}_i^{ss}$. Since $\Tilde{\mathbf{v}}_i^{ss}$ is unknown, we compute it using \eqref{thevenin}, i.e. $\hat{\Tilde{\mathbf{v}}}_i^{ss} = -\frac{1}{\hat{\Tilde{\mathbf{Y}}}_i(\mathbf{j}0)}(\mathbf{i}_i(\mathbf{j}0) -  \hat{\Tilde{\mathbf{Y}}}_i(\mathbf{j}0)\mathbf{v}_i(\mathbf{j}0))$ using the estimated $\hat{\Tilde{\mathbf{Y}}}_i$ from parameter estimation. We then show the results of the equivalent voltage magnitude and phase estimation in Fig. \ref{fig:equivalentvoltage}(a),(b), respectively. Note that we use the relation in \eqref{equivalentvoltage_full} to compute the true equivalent voltage rather than the approximation in \eqref{equivalentmodelsimple}. We also show the corresponding estimation results in the time domain by taking the inverse Fourier transform and separating into the $d$ and $q$ coordinates in Fig. \ref{fig:timedomainresult}(a) and (b), respectively.

We then show the result of equivalent admittance magnitude and phase estimation across all the $5$ VSCs in Fig. \ref{fig:allVSCresult}(a),(b), respectively. The relative average magnitude and phase errors (in $\%$) are computed as $\sum_{k=0}^{N_s}\frac{|\hat{\Tilde{\mathbf{Y}}}_i(\mathbf{j}\omega_k)|-|{\Tilde{\mathbf{Y}}}_i(\mathbf{j}\omega_k)|}{|{\Tilde{\mathbf{Y}}}_i(\mathbf{j}\omega_k)|} \times 100$ (in absolute values and not in $\mathrm{dB}$) and $\sum_{k=0}^{N_s}\frac{\angle\hat{\Tilde{\mathbf{Y}}}_i(\mathbf{j}\omega_k)-\angle {\Tilde{\mathbf{Y}}}_i(\mathbf{j}\omega_k)}{\angle {\Tilde{\mathbf{Y}}}_i(\mathbf{j}\omega_k)
} \times 100$  (in degrees). We are able to observe uniform error distributions across all the $5$ VSCs (with the same hyper-parameters for all the VSCs), indicating that our algorithm is robust and accurately tracks the true equivalent admittance withstanding simultaneous excitations from all the VSCs.

Another interesting result is the effect of the transmission line inductances on the estimation error. In Fig. \ref{fig:increasingrho}(a),(b), we show the resulting magnitude and phase estimation errors of the equivalent admittance associated with VSC $1$ upon varying the R-L ratio of all the lines, respectively. Here, $\rho^{(l)}$ refers to the set of all R-L ratios $\rho_{ij}$ in the range $(0.1(l-1), 0.1l)$. We see that despite deviating from Assumption \ref{RLassumption}(ii),(iii) (i.e., heterogeneous and not strongly inductive lines), we are still able to observe acceptable results on the admittance estimation. Further note that as the lines become less inductive (i.e., along the $x$ axis), the magnitude and phase error increases, thus empirically confirming one of the inferences from Theorem \ref{theorem1}.

\section{Conclusion}\label{conclusionsection}
\textcolor{black}{We proposed a parallel and decentralized algorithm in the frequency domain for grid-equivalent model estimation from the point of view of each converter. To separate the local equivalent impedance from that of the equivalent grid voltage, we designed a bilinear mapping framework that dissolves into two separate estimation problems for equivalent parameter and grid voltage estimation using frequency-domain pre-processing techniques. We then used two estimation algorithms, a least-squares algorithm and a Kalman filter, to estimate the parameters and the grid voltage, respectively, and showed promising results on an interconnected $5-$converter system, all of them in grid-forming mode. We also provided theoretical guarantees on the estimation error bounds on the parameter and equivalent voltage. Future compelling directions include applying the algorithm in a real-world setup; speeding up the estimation; stability analysis of the grid using the estimated model; including synchronous generators and other control techniques for converters (for example, virtual oscillator control, virtual synchronous machine, grid-following control schemes, etc.) in the simulations; identifying the equivalent inertia and damping from the equivalent voltage, and finally, designing decentralized adaptive control algorithms using the identified equivalent model.}

\section*{Appendix}
\subsection{Equivalent Admittance \& Voltage}\label{eqadmvoltcomp}
\textcolor{black}{Using the relation between the PCC and internal voltages $\Delta \mathbf{v}, \Delta \mathbf{u}$, respectively in the partitioned-dynamic grid model in \eqref{gridpartition}, we obtain the following:
\begin{align}
    \Delta \mathbf{i}_i &= \mathbf{Y}_{ii} \Delta \mathbf{v}_i + \mathbf{Y}_{i,-i}(\Delta \mathbf{u}_{-i} - \mathbf{Z}_{-i}^f\Delta \mathbf{i}_{-i}), \label{eqn1} \\
    \Delta \mathbf{i}_{-i} &= \mathbf{Y}_{i,-i}^\top \Delta \mathbf{v}_i + \mathbf{Y}_{-i,-i}(\Delta \mathbf{u}_{-i} - \mathbf{Z}_{-i}^f\Delta \mathbf{i}_{-i}), \label{eqn2}
\end{align}
where, ${\mathbf{Z}_{-i}^f} := \mathrm{diag}[\mathbf{Z}_1^f, \ldots, \mathbf{Z}_{i-1}^f, \mathbf{Z}_{i+1}^f, \ldots \mathbf{Z}_N^f] \in \mathbb{C}^{N-1\times N-1}$ is a diagonal matrix comprising the filter impedance of all the VSCs except $i$. Re-arranging the terms in \eqref{eqn2}, we obtain
\begin{equation*}
    \Delta \mathbf{i}_{-i} = (I_{N-1} + \mathbf{Y}_{-i,-i}\mathbf{Z}_{-i}^f)^{-1}(\mathbf{Y}_{i,-i}^\top \Delta \mathbf{v}_i + \mathbf{Y}_{-i,-i}\Delta \mathbf{u}_{-i}).
\end{equation*}
Upon inserting the above relation in \eqref{eqn1}, we obtain $\Delta \mathbf{i}_i = \Tilde{\mathbf{Y}}_i(\Delta \mathbf{v}_i - \Delta \Tilde{\mathbf{v}}_i)$, with
$\Tilde{\mathbf{Y}}_i, \Delta \Tilde{\mathbf{v}}_i$ in \eqref{equivalentadmittance_full}, \eqref{equivalentvoltage_full}, respectively.}

\subsection{Simplified Equivalent Admittance \& Voltage}\label{eqadmvoltcompsimple}
\textcolor{black}{Using Assumption \ref{RLassumption} and the equations in \eqref{simplifiedeqns} in the equivalent admittance in \eqref{equivalentadmittance_full}, we obtain the following relation:
\begin{align*}
    \Tilde{\mathbf{Y}}_i =& \frac{1}{s'+\rho}\sum_{m=1,m\neq i}^N \gamma_{im}  \\
    & -  \frac{s'+\rho^f}{(s'+\rho)^2}\Gamma_{i,-i}L_{-i}^f(I_{N-1} + \frac{s'+\rho^f}{s'+\rho}\Gamma_{-i,-i}L_{-i}^f)^{-1}\Gamma_{i,-i}^\top,
\end{align*}
with $s' = s + \mathbf{j}$. Note that with $s=\mathbf{j}\omega$ the fraction $\frac{s'+\rho^f}{s'+\rho}$ becomes $M(\omega)\exp(\mathbf{j\phi}(\omega))$, with the magnitude and phase being $M(\omega) = \sqrt{\frac{(\omega+1)^2 + {\rho^f}^2}{(\omega+1)^2 + \rho^2}},\phi = \arctan(\frac{(\omega+1)}{\rho^f}) - \arctan(\frac{(\omega+1)}{\rho})$, respectively. Since we have $\rho,\rho^f << 1$ from Assumption \ref{RLassumption}, $M (\omega)\approx 1, \phi(\omega) \approx 0, \forall \omega >0$, leading to $\frac{s'+\rho^f}{s'+\rho} \approx 1$. Thereby, using this approximation, we obtain a convenient first-order representation for the equivalent admittance in \eqref{equivalentmodelsimple}. The relation for the simplified equivalent voltage follows similarly.}

\subsection{Proof of Lemma \ref{heterolines}}\label{heteroproof}
\textcolor{black}{Let the operator $\mathcal{O}(s^{-r})$ represent a proper transfer function (scalar or matrix) of relative degree $r$. We now use only Assumption \ref{RLassumption}(i) to derive the relative degree of the equivalent admittance. Recall the equivalent admittance relation in \eqref{equivalentadmittance_full}.}

\textcolor{black}{Since each $\mathbf{Y}_{ij}$ is a strictly proper transfer function of order $1$, $\mathbf{Y}_{ii} \in \mathcal{O}(s^{-1})$. Similarly, $\mathbf{Y}_{i,-i}, \mathbf{Y}_{-i,-i} \in \mathcal{O}(s^{-1})$. Since the filters are all first-order, we have $\mathbf{Z}_{-i}^f \in \mathcal{O}(s^1)$.}

\textcolor{black}{Note that the relative degree of the multiplication of two or more transfer functions is the sum of their individual relative degrees. Thus, we have $\mathbf{Y}_{i,-i}\mathbf{Z}_{-i}^f \in \mathcal{O}(s^0)$ and $\mathbf{Y}_{-i,-i}\mathbf{Z}_{-i}^f \in \mathcal{O}(s^0)$. Also note that when you add two transfer functions of the same relative degree, the sum retains that relative degree. Therefore $I_{N-1} + Y_{-i,-i}\mathbf{Z}_{-i}^f \in \mathcal{O}(s^0)$. Since the matrix inverse of a proper transfer function matrix is also proper, we have $(I_{N-1} + Y_{-i,-i}\mathbf{Z}_{-i}^f)^{-1} \in \mathcal{O}(s^0)$. Therefore, we have $\mathbf{Y}_{i,-i}\mathbf{Z}_{-i}^f(I_{N-1} + Y_{-i,-i}\mathbf{Z}_{-i}^f)^{-1}\mathbf{Y}_{i,-i}^\top \in \mathcal{O}(s^{-1})$. Since we have already established that $Y_{ii} \in \mathcal{O}(s^{-1})$ , the overall equivalent admittance transfer function is strictly proper and has a relative degree $1$. A similar argument can be made for the equivalent voltage with $\frac{\mathbf{Y}_{i,-i}}{\mathbf{\Tilde{Y}}_i}( 1_{N-1} - \mathbf{Z}_{-i}^f(I_{N-1} + Y_{-i,-i}\mathbf{Z}_{-i}^f)^{-1}\mathbf{Y}_{-i,-i}) \in \mathcal{O}(s^{0})$}

\subsection{Proof of Theorem \ref{theorem1}}\label{thm1proof}
\textcolor{black}{The solution for the least-squares problem in \eqref{theta_opt} is given by $\hat{\theta}_i = (\sum_{k=0}^{N_s} H_i'(k)^\top H_i'(k))^{-1}(\sum_{k=0}^{N_s}H_i'(k)^\top z_i'(k))$. Note that $z_i'(k) = H_i'(k)\theta_i + \eta(k)$, $\eta(k) = -\Tilde{\gamma}_i\begin{bmatrix}
    \mathbb{Re}\{ \Tilde{h}_i'(k) \}\\
    \mathbb{I}\{ \Tilde{h}_i'(k) \}
\end{bmatrix}$, with $\Tilde{h}_i'(k) = S_i^{R\Tilde{V}}(k)/S_i^{RV}(k)$ (from \eqref{thevenin_IV}). This leads to the following relation:
\begin{equation}\label{thm1eqn1}
    \hat{\theta}_i = \theta_i + \underset{\epsilon_i^\theta}{\underbrace{(\sum_{k=0}^{N_s} H_i'(k)^\top H_i'(k))^{-1}(\sum_{k=0}^{N_s}H_i'(k)^\top \eta(k))}}.
\end{equation}
We will now establish an upper bound on the norm of the parameter estimation error $\epsilon_i^\theta$.}

\textcolor{black}{We rewrite the parameter estimation error as $\epsilon_i^\theta = (\mathbf{H}_i^\top\mathbf{H}_i)^{-1}\mathbf{H}_i^\top\mathbf{\eta}$, where $\mathbf{H}_i:= \begin{bmatrix}
    H_i'(0) & H_i'(1) & \ldots & H_i'(N_s)
\end{bmatrix}^\top$ and $\mathbf{\eta} := \begin{bmatrix}
    \eta(0) & \eta(1) & \ldots & \eta(N_s)
\end{bmatrix}^\top$, thus leading to the upper-bound
\begin{equation}\label{thm1eqn2}
    \lVert \epsilon_i^\theta \rVert \leq \lVert (\mathbf{H}_i^\top\mathbf{H}_i)^{-1}\mathbf{H}_i^\top \rVert \lVert \mathbf{\eta} \rVert.
\end{equation}}
\textcolor{black}{Note that
\begin{equation*}
    \lVert (\mathbf{H}_i^\top\mathbf{H}_i)^{-1}\mathbf{H}_i^\top \rVert = 1/\sigma_{\rm min}\{\mathbf{H}_i\} = 1/\sqrt{\lambda_{\rm min}\{\mathbf{H}_i^\top \mathbf{H}_i\}},
\end{equation*}
where $\sigma_{min}\{A\}$ and $\lambda_{min}\{A\}$ denote the minimum singular value and eigenvalue of a matrix $A$, respectively. According to the definition of $H_i'(k)$ in \eqref{theta_opt}, we have 
\begin{equation*}
    \mathbf{H}_i^\top\mathbf{H}_i = \begin{bmatrix}
        \sum_{k=0}^{N_s} |h_i'(k)|^2 & -\sum_{k=0}^{N_s}\mathbb{R}\{ h_i'(k) \}\\
        \sum_{k=0}^{N_s}\mathbb{R}\{ h_i'(k) \} & N_s
    \end{bmatrix},
\end{equation*}
whose trace and determinant are given by $\mathrm{Tr}\{\mathbf{H}_i^\top\mathbf{H}_i\} = \sum_{k=0}^{N_s} |h_i'(k)|^2 + N_s$, $\mathrm{D}\{\mathbf{H}_i^\top\mathbf{H}_i \} = N_s\sum_{k=0}^{N_s} |h_i'(k)|^2 - (\sum_{k=0}^{N_s}\mathbb{R}\{ h_i'(k) \})^2$, respectively.} \textcolor{black}{Since $\mathbf{H}_i^\top\mathbf{H}_i$ is positive definite, we have the maximum eigenvalue $\lambda_{\rm max}\{\mathbf{H}_i^\top \mathbf{H}_i\} < \mathrm{Tr}\{\mathbf{H}_i^\top\mathbf{H}_i\}$ and $\lambda_{\rm min}\{\mathbf{H}_i^\top \mathbf{H}_i\} = \mathrm{D}\{\mathbf{H}_i^\top\mathbf{H}_i \}/\lambda_{\rm max}\{\mathbf{H}_i^\top \mathbf{H}_i\} > \mathrm{D}\{\mathbf{H}_i^\top\mathbf{H}_i \}/\mathrm{Tr}\{\mathbf{H}_i^\top\mathbf{H}_i \}$ leading to $1/\lambda_{\rm min}\{\mathbf{H}_i^\top\mathbf{H}_i\} < \mathrm{Tr}\{\mathbf{H}_i^\top\mathbf{H}_i\}/\mathrm{D}\{\mathbf{H}_i^\top\mathbf{H}_i\}$. Therefore, we have
\begin{equation}\label{thm1eqn3}
    \lVert (\mathbf{H}_i^\top\mathbf{H}_i)^{-1}\mathbf{H}_i^\top \rVert < \sqrt{\frac{N_s + \sum_{k=0}^{N_s} |h_i'(k)|^2}{N_s\sum_{k=0}^{N_s} |h_i'(k)|^2 - (\sum_{k=0}^{N_s}\mathbb{R}\{ h_i'(k) \})^2}}.
\end{equation}
We now look at the term $\lVert \mathbf{\eta}\rVert$: $\lVert \mathbf{\eta}\rVert \leq \Tilde{\gamma}_i\sqrt{\sum_{k=0}^{N_s} |\Tilde{h}_i'(k)|^2}$. Recall the assumption in Theorem \ref{theorem1}:
\begin{align*}
    \mathbf{v}_i(\mathbf{j}\omega) &= G_i(\mathbf{j}\omega)\mathbf{r}_i(\mathbf{j}\omega) + \sum_{j=1,j\neq i}^{N} G_j(\mathbf{j}\omega)\mathbf{r}_j(\mathbf{j}\omega) + n_i(\mathbf{j}\omega),\\
    \Tilde{\mathbf{v}}_i(\mathbf{j}\omega) &= \Tilde{G}_i(\mathbf{j}\omega)\mathbf{r}_i(\mathbf{j}\omega) + \sum_{j=1,j\neq i}^{N} \Tilde{G}_j(\mathbf{j}\omega)\mathbf{r}_j(\mathbf{j}\omega) + \Tilde{n}_i(\mathbf{j}\omega).
\end{align*}
Using the IV projection in \eqref{thevenin_IV}, we have $h_i'(k) = \Tilde{G}_i(k)/G_i(k)$, which represents the dynamic coupling between the local PCC $i$ and the equivalent grid at frequency $\omega_k$.}
 
\textcolor{black}{We note that ${\sum_{k=0}^{N_s} |\Tilde{h}_i'(k)|^2} = N_s\mathrm{Mean}\{|\Tilde{h}_i'(k)|^2\}$. By considering $\mathcal{S}_i = \sqrt{\text{Mean}\{\vert{}\tilde{G}_i(k)/G_i(k)\vert{}^2\}}$, we arrive at $\lVert \eta \rVert \leq \Tilde{\gamma}_i \mathcal{S}_i\sqrt{N_s}$. Using this inequality and \eqref{thm1eqn3} in \eqref{thm1eqn2}, we obtain:
\begin{align}
    \lVert \epsilon_i^\theta \rVert & < \Tilde{\gamma}_i \mathcal{S}_i\sqrt{N_s} \sqrt{\frac{N_s + \sum_{k=0}^{N_s} |h_i'(k)|^2}{N_s\sum_{k=0}^{N_s} |h_i'(k)|^2 - (\sum_{k=0}^{N_s}\mathbb{R}\{ h_i'(k) \})^2}}, \nonumber \\
    &  = \Tilde{\gamma}_i \mathcal{S}_i\sqrt{N_s} \sqrt{\frac{1 + \frac{1}{N_s}\sum_{k=0}^{N_s} |h_i'(k)|^2}{\sum_{k=0}^{N_s} |h_i'(k)|^2 - \frac{1}{N_s}(\sum_{k=0}^{N_s}\mathbb{R}\{ h_i'(k) \})^2}}, \nonumber \\
    &  = \Tilde{\gamma}_i \mathcal{S}_i \sqrt{\frac{1 + \frac{1}{N_s}\sum_{k=0}^{N_s} |h_i'(k)|^2}{\frac{1}{N_s}\sum_{k=0}^{N_s} |h_i'(k)|^2 - (\frac{1}{N_s}\sum_{k=0}^{N_s}\mathbb{R}\{ h_i'(k) \})^2}}, \nonumber\\
    & = \Tilde{\gamma}_i \mathcal{S}_i \sqrt{\frac{1 + \mathrm{Mean}\{ |h_i'(k)|^2\}}{\mathrm{Mean}\{ |h_i'(k)|^2\} - ( \mathrm{Mean}\{\mathbb{R}\{ h_i'(k) \} \})^2}}, \nonumber\\
    & = \Tilde{\gamma}_i \mathcal{S}_i \sqrt{\frac{1 + \mathrm{Mean}\{ |h_i'(k)|^2\}}{\mathrm{Var}\{ \mathbb{R}\{ h_i'(k) \}\} + \mathrm{Mean}\{\mathbb{I}\{ h_i'(k) \}^2 \}}}. \label{thm1finaleqn}
\end{align}
By replacing $k$ with $\omega_k$ in the above equation, we obtain the main result in Theorem \ref{theorem1}.} 

\subsection{Proof of Theorem \ref{theorem2}}\label{thm2proof}
\textcolor{black}{Recall the bilinear map from \eqref{maineqn}: $z_i(k) = H_i(k)\theta_i + [\theta_i]_2 d_i(k)$. The residual for the Kalman filter algorithm in Section \ref{disturbanceest} is given by  $\tilde{z}_i(k) = z_i(k) - H_i(k)\hat{\theta}_i$. Substituting $\theta_i = \hat{\theta}_i - \epsilon_i^\theta$ yields
\begin{equation*}
    \tilde{z}_i(k) = \left( [\hat{\theta}_i]_2 - [\epsilon_i^\theta]_2 \right) d_i(k) - H_i(k)\epsilon_i^\theta.
\end{equation*}
Using the above residual in the posterior update equation \eqref{dupdate} and then subtracting $d_i(k)$ on both sides leads to:
\begin{align}
    \epsilon_i^d(k) =& \hat{d}_i(k-1) - d_i(k) + \nonumber\\
    &K_d(k) \Big( [\hat{\theta}_i]_2(d_i(k) - \hat{d}_i(k-1)) - [\epsilon_i^\theta]_2 d_i^*(k) - H_i(k)\epsilon_i^\theta \Big) \label{thm2eqn1},
\end{align}
where, the error for the estimation of $d_i$ at frequency $\omega_k$ is given by $\epsilon_i^d(k):= \hat{d}_i(k) - d_i(k)$. Let $\Delta d_i(k) := d_i(k) - d_i(k-1)$ denote the difference of the true value of $d$ between consecutive frequency data points. By writing $ \hat{d}_i(k-1)-d_i(k) = \epsilon_i^d(k-1) - \Delta d_i(k)$ and substituting this in \eqref{thm2eqn1}, we obtain:}

\footnotesize
\textcolor{black}{\begin{align}
    \epsilon_i^d(k) = & \Phi(k)\big( \epsilon_i^d(k-1) - \Delta d_i(k) \big) - K_d(k)\big( H_i(k)\epsilon_i^\theta + [\epsilon_i^\theta]_2 d_i(k) \big),
\end{align}
\normalsize where, $\Phi(k) = I_2 - K_d(k)[\hat{\theta}_i]_2$. Using the Kalman filter gain in \eqref{KFgain} and applying the Woodbury matrix lemma, we obtain $\Phi(k) = \left( I_2 + [\hat{\theta}_i]_2^2 (P_d(k)+Q) R_d^{-1} \right)^{-1}$. Since $Q= \sigma_q I_2$ and $R_d = c_1 I_2$, with $\sigma_q,c_1 >0$ (refer Section \ref{disturbanceest}), and $P_d(\cdot)$ is positive definite, there exists a $\varphi \in (0,1)$ such that
\begin{equation}\label{thm2eqn2}
    \Vert{}\Phi(k)\Vert{} \le \frac{1}{1 + \frac{[\hat{\theta}_i]_2^2}{c_1} \lambda_{\min}\{P_d(k)+Q\}} \leq\varphi < 1, \forall \ k \in \{0,\ldots,N_s\}.
\end{equation} Also note that the Kalman filter gain $K_d(k)$ is symmetric, leading to 
\begin{equation}\label{thm2eqn3}
    \lVert K_d(k) \rVert \leq \lambda_{\rm max}\{ K_d(k) \} \leq 1/[\hat{\theta}_i]_2^2.
\end{equation}}
\textcolor{black}{\normalsize Upon taking the expected squared norm in \eqref{thm2eqn1} and updating it with the identities \eqref{thm2eqn2} and \eqref{thm2eqn3}, we obtain:
\begin{align}
    \mathbb{E}[\Vert{}\epsilon_i^d(k)\Vert{}^2] \le & \varphi^2 \Big( \mathbb{E}[\Vert{}\epsilon_i^d(k-1)\Vert{}^2]
    +\mathbb{E}[\Vert{}\Delta d_i(k)\Vert{}^2] \Big) + \nonumber \\
     &\frac{1}{[\hat{\theta}_i]_2^2} \mathbb{E} \Big[ \big\Vert{} H_i(k)\epsilon_i^\theta + [\epsilon_i^\theta]_2 d_i(k) \big\Vert{}^2 \Big], \label{thm2eqn5}
\end{align}
where, we made use of the property $\lVert A  B \rVert \leq \lVert A \rVert \lVert B \rVert$ and the fact that the expectation of the cross terms are zero.}

\textcolor{black}{\normalsize When we note the structure of $H_i(k)$ in \eqref{maineqn}, we notice that the trace of $H_i(k)^\top H_i(k)$ is $1 + |h_i(k)|^2$. Furthermore, since $H_i(k)^\top H_i(k)$ is positive-definite, we have $\lVert H_i(k) \rVert^2 \leq 1 + |h_i(k)|^2$. Updating this in \eqref{thm2eqn5}, we get
\begin{align}
    \mathbb{E}[\Vert{}\epsilon_i^d(k)\Vert{}^2] \le & \varphi^2 \Big( \mathbb{E}[\Vert{}\epsilon_i^d(k-1)\Vert{}^2]
    +\mathbb{E}[\Vert{}\Delta d_i(k)\Vert{}^2] \Big) + \nonumber \\
     &\frac{2\lVert \epsilon_i^\theta\rVert^2}{[\hat{\theta}_i]_2^2}(1 + |h_i(k)|^2 + \mathbb{E} \Big[\lVert d_i(k) \big\Vert{}^2 \Big])  ,\label{thm2eqn6}
\end{align}
\normalsize where we used the properties $\lVert A + B \rVert^2 \leq 2(\lVert A \rVert^2 + \lVert B \rVert^2)$ and $\lVert [\epsilon_i^\theta]_2 \rVert \leq \lVert \epsilon_i^\theta \rVert$. By replacing $k$ with $\omega_k$, we obtain the main result in Theorem \ref{theorem2}.}

\bibliographystyle{IEEEtran}

\end{document}